\newcommand{\OR}{\mathsf{OR}}
\newcommand{\AND}{\mathsf{AND}}
\newcommand{\XOR}{\mathsf{XOR}}
\newcommand{\NOR}{\mathsf{NOR}}
\newcommand{\IND}{\mathsf{IND}}
\newcommand{\ADD}{\mathsf{ADDR}}
\newcommand{\HADD}{\mathsf{HADD}}
\newcommand{\PARITY}{\mathsf{PARITY}}
\renewcommand{\DJ}{\mathsf{DJ}}
\newcommand{\R}{\mathbb{R}}
\newcommand{\E}{\mathbb{E}}
\newcommand{\bra}[1]{\{#1\}}
\newcommand{\mathify}[1]{\ifmmode{#1}\else\mbox{$#1$}\fi}
\newcommand{\abs}[1]{\mathify{\big| #1 \big|}}
\newcommand{\pmone}{\bra{-1, 1}}
\renewcommand{\deg}{\mathrm{deg}}
\newcommand{\adeg}{\widetilde{\mathrm{deg}}}
\newcommand{\wh}{\widehat}
\newcommand{\eps}{\epsilon}
\newcommand{\asnorm}[2]{\|\widehat{#1}\|_{1,#2}}
\newcommand{\snorm}[1]{\|\widehat{#1}\|_{1}}
\newcommand{\what}{\textnormal{selector}}
\newtheorem{theorem}{Theorem}[section]
\newtheorem{corollary}[theorem]{Corollary}
\newtheorem{remark}[theorem]{Remark}
\newtheorem{lemma}[theorem]{Lemma}
\newtheorem{claim}[theorem]{Claim}
\newtheorem{definition}[theorem]{Definition}
\newtheorem{conjecture}[theorem]{Conjecture}
\newtheorem{observation}[theorem]{Observation}
\newtheorem{fact}[theorem]{Fact}
\DeclareMathOperator*{\Exp}{\mathbb{E}}
\title{Quantum Query-to-Communication Simulation Needs a Logarithmic Overhead}
\date{}
\begin{document}
\author[1]{Sourav Chakraborty}
\author[2]{Arkadev Chattopadhyay}
\author[3]{Nikhil S.~Mande}
\author[1]{Manaswi Paraashar}

\affil[1]{Indian Statistical Institute, Kolkata}
\affil[2]{TIFR, Mumbai}
\affil[3]{Georgetown University}

\maketitle

\begin{abstract}
Buhrman, Cleve and Wigderson (STOC'98) observed that for every Boolean function $f : \pmone^n \to \pmone$ and $\bullet : \pmone^2 \to \pmone$ the two-party bounded-error quantum communication complexity of $(f\circ \bullet)$ is $O(Q(f) \log n)$, where $Q(f)$ is the bounded-error quantum query complexity of $f$. Note that the bounded-error randomized communication complexity of $(f \circ \bullet)$ is bounded by $O(R(f))$, where $R(f)$ denotes the bounded-error randomized query complexity of $f$. Thus, the BCW simulation has an extra $O(\log n)$ factor appearing that is absent in classical simulation. A natural question is if this factor can be avoided. H{\o}yer and de Wolf (STACS'02) showed that for the Set-Disjointness function, this can be reduced to $c^{\log^* n}$ for some constant $c$, and subsequently Aaronson and Ambainis (FOCS'03) showed that this factor can be made a constant. That is, the quantum communication complexity of the Set-Disjointness function (which is $\NOR_n \circ \wedge$) is $O(Q(\NOR_n))$.

Perhaps somewhat surprisingly, we show that when $ \bullet = \oplus$, then the extra $\log n$ factor in the BCW simulation is unavoidable. In other words, we exhibit a total function $F : \pmone^n \to \pmone$ such that $Q^{cc}(F \circ \oplus) = \Theta(Q(F) \log n)$.

To the best of our knowledge, it was not even known prior to this work whether there existed a total function $F$ and 2-bit function $\bullet$, such that $Q^{cc}(F \circ \bullet) = \omega(Q(F))$.
\end{abstract}

\section{Introduction}\label{sec: intro}

Classical communication complexity, introduced by Yao~\cite{Yao79}, is aptly called the `swiss-army-knife' for understanding, especially the limitations of, classical computing. Quantum communication complexity holds the same promise with regards to quantum computing. Yet, there are many problems that remain open. One broad theme is to understand the fundamental differences between classical randomized and quantum protocols, especially for computing total functions. 

Recall a standard way to derive a communication problem from a function $f: \pmone^n \to \pmone$. Each input bit of $f$ is \emph{encoded} between the two players Alice and Bob, using an instance of a binary primitive, denoted by $\bullet : \{-1,1\} \times \{-1,1\} \to \{-1,1\}$, giving rise to the communication problem of evaluating $f \circ \bullet$.
Each input bit to $f$ is obtained by evaluating $\bullet$ on the relevant bit of Alice and that of Bob, i.e. $\big(f \circ \bullet\big)\big(x,y\big) = f\big(\bullet(x_1,y_1),\ldots,\bullet(x_n,y_n)\big)$ and $x,y$ are each $n$-bit strings given to Alice and Bob respectively. 
Many well known functions in communication complexity are derived in this way: Set-Disjointness is $\mathsf{NOR} \circ \wedge$, Inner-Product being $\oplus \circ \wedge$, Equality being $\mathsf{NOR}\circ \oplus$. 
Set-Disjointness is also a standard total function where quantum protocols provably yield a significant cost saving over their classical counterpart.

A natural and well studied question is this regard is what is the relationship between the query complexity of $f$ and the communication problem of $f\circ \bullet$, when the $\bullet$ is $\wedge$ or $\oplus$.\footnote{Here $\wedge$ and $\oplus$ are the AND function and the XOR functions on 2 bits respectively.} This question has been studied for particular interesting functions or special classes of functions. Classically, it is folklore that $$R^{cc}(f \circ \bullet\big) = O(R(f)),$$ where $R(f)$  denotes the bounded-error randomized query complexity of $f$ and $R^{cc}(f\circ \bullet)$ denotes the bounded-error randomized communication complexity  for computing $(f\circ \bullet)(x,y)$ when Alice and Bob has inputs $x$ and $y$ respectively. However in the quantum world such a relation is not at all obvious. In an influential work, Buhrman, Cleve and Wigderson \cite{BCW98} observed that a general and natural recipe exists for constructing a quantum communication protocol for $(f \circ \bullet)$, using a quantum query algorithm for $f$ as a black-box.

\begin{theorem}[\cite{BCW98}]\label{thm: qtm_sim_log_loss}
For any Boolean function $f : \pmone^n \to \pmone$, we have 
\[
Q^{cc}\big(f \circ \bullet\big) = O\big(Q(f)\cdot \log n\big),
\]
where $\bullet$ is either $\wedge$ or $\oplus$.
\end{theorem}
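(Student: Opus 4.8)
The plan is to reproduce the simulation of \cite{BCW98}: start from a bounded-error quantum query algorithm for $f$ and turn it into a communication protocol for $f \circ \bullet$ by implementing each of its queries using $O(\log n)$ qubits of communication. Fix a bounded-error quantum query algorithm $\mathcal{A}$ for $f$ making $T = Q(f)$ queries. We may assume $\mathcal{A}$ has the standard form $U_T O_z U_{T-1} \cdots U_1 O_z U_0$ acting on three registers: an \emph{index register} $I$ of $\lceil \log n \rceil$ qubits, a single-qubit \emph{answer register} $A$, and a workspace register $W$; here each $U_j$ is a fixed unitary independent of the input, and $O_z : |i\rangle_I |b\rangle_A |w\rangle_W \mapsto |i\rangle_I |b \oplus z_i\rangle_A |w\rangle_W$ is the query oracle, where on an instance $(x,y)$ of $f \circ \bullet$ the relevant string is $z = (z_1, \dots, z_n)$ with $z_i = \bullet(x_i, y_i)$. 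In the protocol Alice holds all three registers and applies each $U_j$ locally (it depends only on $f$, not on the input); the only step needing communication is implementing $O_z$, for which Alice and Bob will exchange $I$ and $A$ (plus one extra ancilla in the $\wedge$ case), i.e.\ $O(\log n)$ qubits per query.

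For $\bullet = \oplus$ we have $z_i = x_i \oplus y_i$. Alice first applies locally the map $|i\rangle_I |b\rangle_A \mapsto |i\rangle_I |b \oplus x_i\rangle_A$ (she knows all of $x$), then sends $I$ and $A$ to Bob; Bob applies $|i\rangle_I |b\rangle_A \mapsto |i\rangle_I |b \oplus y_i\rangle_A$ and returns both registers. The composition is exactly $O_z$, at a cost of $2(\lceil \log n \rceil + 1)$ qubits per query.

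For $\bullet = \wedge$ we have $z_i = x_i \wedge y_i$; here Alice additionally uses a fresh single-qubit ancilla $C$ initialised to $|0\rangle$. Step (i): Alice sends $I$ and $C$ to Bob. Step (ii): Bob applies $|i\rangle_I |c\rangle_C \mapsto |i\rangle_I |c \oplus y_i\rangle_C$ and returns them. Step (iii): Alice applies locally $|i\rangle_I |b\rangle_A |c\rangle_C \mapsto |i\rangle_I |b \oplus (x_i \wedge c)\rangle_A |c\rangle_C$ (on the branch carrying index $i$, XOR $C$ into $A$ iff $x_i = 1$). Step (iv): Alice again sends $I$ and $C$ to Bob. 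Step (v): Bob re-applies $|i\rangle_I |c\rangle_C \mapsto |i\rangle_I |c \oplus y_i\rangle_C$, which restores $C$ to $|0\rangle$, and returns the registers. The net effect on $(I,A)$ is exactly $O_z$, while $C$ ends disentangled in the state $|0\rangle$; the cost is $4(\lceil \log n \rceil + 1)$ qubits per query.

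After the $T$ simulated queries, Alice measures the output qubit, obtains $f(z)$ with probability at least $2/3$, and sends the resulting bit to Bob. The total communication is $O(T \log n) = O(Q(f) \log n)$, proving the bound. The only delicate point is the $\wedge$ case: if the borrowed ancilla $C$ were not restored to a fixed state, it would remain entangled with the algorithm's state and corrupt the rest of the computation, so the uncomputation in step (v) is essential. (Reducing the number of rounds takes a little more care but is irrelevant for the qubit bound.)
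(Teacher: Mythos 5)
Your simulation is correct, and it is exactly the argument the paper relies on: Theorem~\ref{thm: qtm_sim_log_loss} is quoted from \cite{BCW98} without proof, and your protocol (simulating each of the $Q(f)$ oracle calls with $O(\log n)$ qubits, with the extra ancilla and uncomputation step for the $\wedge$ gadget) is the standard BCW black-box simulation that the citation refers to. No gaps; the bookkeeping of $2(\lceil\log n\rceil+1)$ and $4(\lceil\log n\rceil+1)$ qubits per query and the final $O(Q(f)\log n)$ bound are all as in the original.
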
 
 Here $Q(f)$ denotes the bounded-error quantum query complexity of $f$, and $Q^{cc}(f\circ \bullet)$ denotes the bounded-error quantum communication complexity for computing $(f\circ \bullet)(x,y)$. Thus in the quantum world one incurs a logarithmic factor in the BCW simulation while no such factor is needed in the randomized setting. The basic question that arises naturally and which we completely answer in this work, is the following: analogous to the classical model, can this multiplicative $\log n$ blow-up in the communication cost be always avoided by designing quantum communication protocols that more cleverly simulate quantum query algorithms? 

A priori, it is not clear what the answer to this question ought to be. For certain special functions and some classes of functions, quantum protocols exist where the $\log n$ factor can be saved. First, H{\o}yer and de Wolf~\cite{HdW02} designed a quantum protocol for Set-Disjointness of cost $O(\sqrt{n}c^{\log^{*}n})$, speeding up the BCW simulation significantly. Later, Aaronson and Ambainis~\cite{AA05} gave a more clever protocol that only incurred a constant factor overhead from Grover's search using more involved ideas.

For partial functions, tightness of the BCW simulation is known in some settings.
For example, consider the Deutsch-Jozsa ($\DJ$) problem, where the input is an $n$-bit string with the promise that its Hamming weight is either 0 or $n/2$, and $\DJ$ outputs $-1$ if the Hamming weight is $n/2$, and 1 otherwise.  $\DJ$ has quantum query complexity $1$ whereas the \emph{exact} quantum communication complexity of $(\DJ\circ \oplus)$ is $\log n$.  Note that it is unclear whether the $\log n$ factor loss here is additive or multiplicative\footnote{Indeed, there are well-known situations where complexity of 1 vs.~$\log n$ can be deceptive. The classical private-coin randomized communication complexity of Equality is $\Theta(\log n)$, whereas the public-coin cost is well known to be $O(1)$. Newman's Theorem shows that this difference in costs, in general, is \emph{not multiplicative} but merely \emph{additive}.}.
Montanaro, Nishimura and Raymond~\cite{MNR11} exhibited a partial function for which the BCW simulation is tight (up to constants) in the \emph{exact} and \emph{non-deterministic} quantum settings.  They also observed the existence of a total function for which the BCW simulation is tight (up to constants) in the unbounded-error setting.
If we allow multi-output partial functions, then tightness of the BCW simulation is known: Consider the following function: $f$ takes an $n$-bit string $x$ as input which is promised to be a Hadamard codeword (see Definition~\ref{defn: hcwd}), and outputs a $\log n$ bit string $z$ for which $x$ is its Hadamard codeword.  The communication problem $f \circ \wedge$, where the inputs to the two players are promised to be such that their bitwise-$\AND$ yields a Hadamard codeword, has bounded-error quantum communication complexity $\log n$, but $Q(f) = O(1)$.  Again, it is not clear here whether the $\log n$ factor loss is additive or multiplicative.

As far as we know, there was no (partial or total) Boolean-valued function $f$ known prior to our work for which the \emph{bounded-error} quantum communication complexity of $f \circ \bullet$ (i.e.~$Q^{cc}(f \circ \bullet)$) is even $\omega(Q(f))$, where $\bullet$ is either $\wedge$ or $\oplus$.

In this paper, we exhibit the first \emph{total function} witnessing the tightness of the BCW simulation in arguably the most well-known quantum model, which is the bounded-error model.

\begin{theorem}\label{thm: mainold}
There exists a total function $F : \pmone^n \to \pmone$ for which,
\begin{equation}\label{eq:mainold}
Q^{cc}(F \circ \oplus) = \Theta(Q(F) \log n).
\end{equation}
\end{theorem}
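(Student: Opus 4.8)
The plan is to get the upper bound $Q^{cc}(F\circ\oplus)=O(Q(F)\log n)$ for free from the BCW simulation (Theorem~\ref{thm: qtm_sim_log_loss}), so all the effort goes into producing a \emph{total} $F$ with $Q^{cc}(F\circ\oplus)=\Omega(Q(F)\log n)$. Since the overhead we must defeat is exactly the ``$\log n$'' and not a $Q(F)$-dependent factor, it is enough to aim for $Q(F)=O(1)$ together with $Q^{cc}(F\circ\oplus)=\Omega(\log n)$; a larger separation can afterwards be extracted by taking a suitable XOR-combination of disjoint copies of such a gadget. The guiding intuition is the one behind the Deutsch--Jozsa and Hadamard-codeword examples recalled above: I want a total function on $n$ bits in which a quantum query algorithm can \emph{decode}, using only $O(1)$ queries (Bernstein--Vazirani / Fourier sampling), a hidden $\Theta(\log n)$-bit string $z$ planted in part of the input, and then spend $O(1)$ more queries using $z$; whereas any cheap protocol for the XOR-composed problem is forced to essentially transmit $z$.

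The first step is a general Fourier-analytic lower bound for XOR functions: for every $F$,
\[
Q^{cc}(F\circ\oplus)\;=\;\Omega\!\big(\log \asnorm{F}{\epsilon}\big),
\]
the logarithm of the approximate spectral norm ($\ell_1$ Fourier norm). I would prove it along the standard path. A $c$-qubit bounded-error protocol for $F\circ\oplus$ yields, by the rank/$\gamma_2$ argument, a real matrix $M'$ with $\|M'-[F(x\oplus y)]_{x,y}\|_\infty\le 1/3$ and $\gamma_2(M')=2^{O(c)}$. Averaging over the shift action, $M'\mapsto \E_{z}[\Pi_z M' \Pi_z]$, produces the communication matrix $[\tilde F(x\oplus y)]_{x,y}$ of a genuine XOR function, with $\tilde F$ still an $\ell_\infty$-$(1/3)$-approximator of $F$ and $\gamma_2\le 2^{O(c)}$ (averaging does not increase $\gamma_2$, and it fixes the already shift-invariant matrix $[F(x\oplus y)]$). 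Finally, the eigenvalues of an XOR-function matrix are $\{2^n\widehat{\tilde F}(S)\}_S$, so $\gamma_2\ge\|\cdot\|_{\mathrm{tr}}/2^n=\snorm{\tilde F}\ge \asnorm{F}{1/3}$, i.e.\ $c=\Omega(\log\asnorm{F}{1/3})$.

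The second, and main, step is to build a \emph{total} Boolean $F$ with both $Q(F)=O(1)$ and $\asnorm{F}{\epsilon}=n^{\Omega(1)}$. For the query bound I would use a Hadamard-encoded addressing function: on the family of inputs whose first block $u$ is the Hadamard codeword of some $z\in\{0,1\}^{\Theta(\log n)}$, set $F$ equal to the $z$-th coordinate of the second block $v$, and extend $F$ to all inputs so that the Bernstein--Vazirani-based $O(1)$-query algorithm (decode $z$ from $u$; read one bit of $v$) is still correct with bounded error --- equivalently, so that non-codeword inputs are handled by a trivial subroutine. For the spectral-norm bound I would use LP-duality,
\[
\asnorm{F}{\epsilon}\;=\;\max\big\{\,\langle\psi,F\rangle-\epsilon\|\psi\|_1 \;:\; \|\widehat{\psi}\|_\infty\le 1\,\big\},
\]
and produce a dual witness $\psi$ whose Fourier mass lies on the $\Theta(n)$ degree-$2$ monomials $u_T v_a$ that $F$ agrees with on codeword inputs; concretely I would take $\psi\propto\sum_{T,a}(-1)^{\langle a,T\rangle}\chi_{\{u_T,v_a\}}$, which is exactly (a scaling of) the bilinear form that $F$ equals on the structured subdomain, and use orthogonality of characters (Fourier inversion restricted to codewords) to certify $\langle\psi,F\rangle-\epsilon\|\psi\|_1=\Omega(n^{\delta})$ for a fixed $\delta>0$.

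I expect the genuine difficulty to be the tension between these two requirements. Small query complexity forces $\adeg_\epsilon(F)=O(1)$, and then a careless dual witness is instantly cancelled by $F$'s own low-degree approximator; so the witness has to live on --- and exploit the rigidity of --- the Hadamard-codeword subdomain, a vanishingly small but extremely structured slice of the cube, while the \emph{extension} of $F$ off that slice must be chosen in concert so as to give the witness no harmful correlation. Dually, turning $F$ into a bona fide total function without spoiling $Q(F)=O(1)$ is delicate precisely because Bernstein--Vazirani needs the codeword promise, so the off-promise behaviour must be engineered to keep the decoder correct. Establishing that a single total function can have $\adeg_\epsilon(F)=O(1)$ and yet $\asnorm{F}{\epsilon}$ polynomially large is the crux; the remaining ingredients --- BCW for the upper bound, and symmetrization plus the $\gamma_2$/trace-norm inequality for the lower bound --- are standard.
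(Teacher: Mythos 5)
Your opening step (the Lee--Shraibman bound $Q^{cc}(F\circ\oplus)=\Omega(\log\asnorm{F}{1/3})$) and your use of the BCW simulation for the upper bound coincide with the paper. The gap is in the core of your plan: a \emph{total} function $F$ with $Q(F)=O(1)$ (or even $\adeg(F)=O(1)$) and $\asnorm{F}{1/3}=n^{\Omega(1)}$ does not exist, so the ``crux'' you defer to the end is not merely delicate but provably impossible. For total Boolean functions one has $D(F)=O(Q(F)^6)$~\cite{BBC+01}, so $Q(F)=O(1)$ forces $D(F)=O(1)$, hence $F$ depends on only $O(1)$ variables; such a junta has $\snorm{F}=O(1)$ and therefore $\asnorm{F}{1/3}\le\snorm{F}=O(1)$, not polynomially large. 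The same conclusion follows from $\adeg(F)=O(1)$ via $\adeg(F)=\Omega(\sqrt{\mathrm{bs}(F)})$ and $D(F)\le \mathrm{bs}(F)^3$. Concretely, the place where your construction breaks is the totalization: Bernstein--Vazirani decodes the hidden index with $O(1)$ queries only under the codeword promise. Once $F$ must be total, on a non-codeword $u$ the BV output is a sample from $\widehat{u}(\cdot)^2$ and the final answer $v_z$ need not concentrate on any value, so you cannot define $F$ off the promise ``so that the same algorithm stays correct''; and any completion whose value on near-codewords can disagree with its value on codewords requires $\Omega(\sqrt{n})$ queries just to tell these apart. Your fallback of XOR-ing disjoint copies of the gadget cannot repair this, since the gadget itself cannot exist (and the direct-sum step you invoke for the communication lower bound of the combined function is also left unproved).

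This is exactly why the paper does not aim at $Q(F)=O(1)$: it takes the outer function to be $\PARITY_{n^{\delta}/2}$ (with $\delta\ge 1/2$) composed with the Hadamard addressing function $\HADD_{n^{1-\delta}}$, completed by $-1$ off the promise. The Grover-based verification that the address blocks really are the decoded codewords costs $O(\sqrt{n})$ queries, which is absorbed into the $\Theta(n^{\delta})$ cost of the outer parity, giving $Q(F)=\Theta(n^{\delta})$; and instead of a hand-built dual witness on the codeword slice, the spectral-norm lower bound $\log\asnorm{F}{1/3}=\Omega\bigl(\adeg(\PARITY_{n^{\delta}/2})\log n\bigr)=\Omega(n^{\delta}\log n)$ comes from the general lifting lemma for addressing functions (Lemma~\ref{lem: lifting}), which transfers approximate degree of the outer function into approximate spectral norm of the composition. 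If you want to salvage your outline, you must let $Q(F)$ grow at least as fast as the cost of verifying the addressing structure and pair the decoding trick with an outer function of matching approximate degree --- which is precisely the paper's construction.
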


The statement of Theorem~\ref{thm: mainold} does not necessarily guarantee that a function exists that both satisfies Equation~\ref{eq:mainold} and has bounded-error quantum query complexity (as a function of $n$) arbitrarily close to $n$. We answer this question by proving a more general result, from which Theorem~\ref{thm: mainold} immediately follows.

\begin{theorem}[Main Theorem]\label{thm: extmain}
For any constant $0 < \delta < 1$, there exists a total function $F : \pmone^n \to \pmone$ for which $Q(F) = \Theta(n^{\delta})$ and 
\[
Q^{cc}(F \circ \oplus) = \Theta(Q(F) \log n).
\]
\end{theorem}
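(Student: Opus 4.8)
The upper bound $Q^{cc}(F\circ\oplus) = O(Q(F)\log n)$ is exactly the BCW simulation (Theorem~\ref{thm: qtm_sim_log_loss}) and holds for every $F$, so the whole content of the theorem is to exhibit a single total $F$ that simultaneously satisfies $Q(F) = \Theta(n^\delta)$ and forces $Q^{cc}(F\circ\oplus) = \Omega(Q(F)\log n)$. The plan is to derive the communication lower bound from a Fourier-analytic quantum lower bound available specifically for $\XOR$-functions. The logarithm of the approximate factorization norm lower-bounds quantum communication, $Q^{cc}_{\eps}(g) = \Omega\big(\log\gamma_2^{O(\eps)}(M_g)\big)$, and for a $\XOR$-function one has $\gamma_2(M_{F\circ\oplus}) = \snorm{F}$, which upgrades --- by averaging an approximating matrix over joint shifts $(x,y)\mapsto(x\oplus r,y\oplus r)$ and using convexity and shift-invariance of $\gamma_2$ --- to $\gamma_2^{\eps}(M_{F\circ\oplus}) = \asnorm{F}{\eps}$, the \emph{approximate spectral norm}. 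Hence $Q^{cc}(F\circ\oplus) = \Omega\big(\log\asnorm{F}{1/3}\big)$. Conversely, a polynomial of degree $\adeg(F)$ that $\tfrac13$-approximates $F$ has at most $\binom{n}{\le\adeg(F)}$ monomials, each of magnitude at most $\tfrac43$, so $\log\asnorm{F}{1/3} = O(\adeg(F)\log n) = O(Q(F)\log n)$ always. It therefore suffices to build a total Boolean $F$ with $Q(F) = O(n^\delta)$ whose approximate spectral norm is \emph{as large as it could possibly be for that query cost}, namely $\asnorm{F}{1/3} = n^{\Omega(n^\delta)}$; the matching lower bound $Q(F) = \Omega(n^\delta)$ is then automatic, since $Q(F)\ge\tfrac12\adeg(F) = \Omega\big(\log\asnorm{F}{1/3}\,/\log n\big) = \Omega(n^\delta)$.

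For the construction I would use a suitably parametrized Hadamard-encoded variant of the Indexing/Address function, $\HADD_m$ on $m$ input bits: one block is the \what\ block, thought of on the natural promise as the Hadamard codeword of an address $z$, a second block is a memory block, and on promise $\HADD_m$ returns the $z$-th memory bit; off the promise it is completed to a fixed total Boolean function of the associated Hadamard bilinear form. The two features wanted are (i) $Q(\HADD_m) = \Theta(\adeg(\HADD_m)) = \Theta(\sqrt m)$ --- the upper bound from a Bernstein--Vazirani-plus-amplitude-estimation routine that reads off the relevant index followed by one more query, the lower bound from the polynomial method applied to the bilinear-form sign; and (ii) because the Hadamard form is ``generic'', any $\tfrac13$-error polynomial approximation of $\HADD_m$ must spread constant Fourier $\ell_1$-mass over essentially all of the $\binom{m}{\le\adeg(\HADD_m)}$ low-degree monomials, so that $\asnorm{\HADD_m}{1/3} = m^{\Theta(\sqrt m)}$ is spectral-norm-extremal. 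For $\delta\le\tfrac12$ one can then take $F$ to be $\HADD_{n^{2\delta}}$ together with $n-n^{2\delta}$ ignored input bits, and for $\delta\ge\tfrac12$ one takes $F = \PARITY_t\circ\HADD_m$ on $n = tm$ bits, with copies on disjoint inputs and $t,m$ chosen so that $t\sqrt m = \Theta(n^\delta)$; quantum query complexity and approximate degree multiply along $\PARITY$, so in either case $Q(F) = \Theta(n^\delta)$, and for the spectral-norm side the identity $M_{F\circ\oplus} = M_{\HADD_m\circ\oplus}^{\otimes t}$ (valid precisely because the copies are on disjoint variables) reduces the task to understanding tensor powers of a single gadget.

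The heart of the argument --- and the step I expect to be the main obstacle --- is showing that the \emph{approximate} spectral norm survives both the totalization of the gadget and the $\PARITY$-composition, i.e. that $\log\asnorm{F}{1/3} = \Omega(n^\delta\log n)$. The \emph{exact} spectral norm of $\HADD_m$, and hence of $F$, is easy to compute and is already $m^{\Omega(\sqrt m)}$ (resp.\ $n^{\Omega(n^\delta)}$), but the exact norm can be exponentially larger than the bounded-error one, so a robust certificate is needed rather than a counting argument. I would produce this through LP/SDP duality: exhibit for the single gadget an explicit dual witness --- a ``hard'' distribution on inputs together with a near-sign function, in the spirit of the dual objects used for $\IP$-type lower bounds --- certifying $\asnorm{\HADD_m}{1/3} \ge m^{\Omega(\sqrt m)}$ via a Fourier/spectral computation on the Hadamard matrix, and then take tensor powers of this witness, using that the dual factorization norm $\gamma_2^{*}$ is multiplicative under tensoring. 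The one delicate point is that the error parameter of the dual witness not degrade as $t$ grows; I would handle this either by first amplifying the single gadget to inverse-polynomially small error before composing (so that the union of the $t$ copies' errors stays below a constant, at the cost of only logarithmic factors that are absorbed), or by a direct-product-style argument showing the witness is self-improving under tensoring. Assembling the pieces --- the BCW simulation for the $O(\cdot)$ direction, and $Q^{cc}(F\circ\oplus) = \Omega(\log\asnorm{F}{1/3}) = \Omega(n^\delta\log n) = \Omega(Q(F)\log n)$ together with $Q(F) = \Theta(n^\delta)$ for the $\Omega(\cdot)$ direction --- yields the theorem.
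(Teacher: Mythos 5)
Your scaffolding matches the paper's: BCW for the $O(\cdot)$ direction, the Lee--Shraibman bound $Q^{cc}(F\circ\oplus)=\Omega(\log\asnorm{F}{1/3})$ for the $\Omega(\cdot)$ direction, a Hadamard-addressing gadget composed with an outer $\PARITY$, and Bernstein--Vazirani plus Grover for $Q(F)=O(n^{\delta})$. But the step that carries all the content --- $\log\asnorm{F}{1/3}=\Omega(n^{\delta}\log n)$ --- is the one you flag as the main obstacle and then only outline, and the outline has two genuine gaps. First, the claim that the single totalized gadget satisfies $\asnorm{\HADD_m}{1/3}=m^{\Theta(\sqrt m)}$ is asserted on ``genericity'' grounds with no dual witness exhibited; nothing forces a bounded-error approximator to spread $\ell_1$-mass over essentially all low-degree monomials, and tightness of the Cauchy--Schwarz upper bound $2^{O(\adeg(g)\log m)}$ for a specific function is precisely the kind of statement for which (as the paper emphasizes) very few techniques exist --- it is not even clear your claim is true for this gadget. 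Second, even granting it, tensoring a dual witness across the $\PARITY_t$-composition hits the standard error-degradation problem: a product witness has correlation decaying like $(1-\delta_0)^t$, so a constant-error single-gadget certificate proves nothing once $t$ grows, and your proposed remedies (``amplify the gadget before composing'', a ``self-improving'' direct-product argument) name the difficulty rather than resolve it --- in particular one cannot amplify error inside a fixed total function whose norm is being bounded; what you would actually need is a vanishing-error XOR-type direct product theorem for the approximate spectral norm, which is itself a nontrivial open-ended task.

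The paper avoids both issues with a different mechanism, its lifting lemma (Lemma~\ref{lem: lifting}): it never bounds the approximate spectral norm of the gadget and never tensors witnesses. It works with the partial composed function $\PARITY_{n^{\delta}/2}^{\HADD_{n^{1-\delta}}}$, completed to $-1$ off-promise globally (not gadget-by-gadget, which also keeps the Grover verification step of the query upper bound one-sided and cheap), and argues directly on a purported low-spectral-norm approximator $P$: under a uniformly random valid assignment to the address bits each target bit is selected with probability $1/t$, so relevant monomials of degree at least $\adeg_{0.99}(\PARITY)$ retain expected $\ell_1$-mass at most $\snorm{P}\cdot t^{-\adeg_{0.99}(\PARITY)}$; fixing a good assignment, discarding those monomials, and averaging out the unselected target variables leaves a polynomial of degree below $\adeg_{0.99}(\PARITY)$ approximating $\PARITY$ to error below $0.99$, a contradiction. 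Thus the $\log n$ factor comes from the $1/t$ selection probability combined with the maximal approximate degree of the outer $\PARITY$, not from any spectral-norm property of the gadget. As written, your proposal is a plan rather than a proof; completing it along your lines would require proving both the single-gadget extremality claim and the direct-product step, whereas the restriction argument gives the needed bound in one short lemma.
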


\subsection{Overview of our approach and techniques}

To demonstrate the tightness of the BCW simulation for a total function in the quantum bounded-error setting we have to find a function $F$ such that 
$Q^{cc}(F\circ \bullet)=\Theta(Q(F)\log n)$ for some choice of $\bullet$ (that is, either $\bullet$ is $\wedge$ or $\oplus$). This requires us to prove an 
upper bound of $Q(F)$ and a lower bound on $Q^{cc}(F\circ \bullet)$.
We consider the case when $\bullet$ is the $\oplus$ function. 

For the inner function the $\oplus$ function is preferred over the $\wedge$ function for one crucial reason: we have an analytical technique for proving lower bounds on $Q^{cc}(F\circ \oplus)$, due to Lee and Shraibman~\cite{LS09}.
They reduced the problem of lower bounding the bounded-error quantum communication complexity of $(F\circ \oplus)$ to proving lower bounds on an  analytic property of $F$, called its \emph{approximate spectral norm}.
The $\epsilon$-approximate spectral norm of $F$, denoted by $\|\hat{F}\|_{1,\epsilon}$, is defined to be the minimum $\ell_1$-norm of the coefficients of a polynomial that approximates $F$ uniformly to error $\epsilon$ (see Definition~\ref{defn: awt}).
Lee and Shraibman~\cite{LS09} showed that $Q^{cc}(F \circ \oplus) = \Omega(\log \asnorm{F}{1/3})$.
Thus, the lower bound of Theorem~\ref{thm: extmain} follows immediately from the result below.

\begin{theorem}\label{thm: wtdeg}
For any constant $0 < \delta < 1$, there exists a total function $F : \pmone^n \to \pmone$ for which $Q(F) = \Theta(n^{\delta})$ and 
\[
    \log\big( \asnorm{F}{1/3} \big)= \Theta(Q(F) \log n).
\]
\end{theorem}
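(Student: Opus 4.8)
The plan is to build $F$ as a composition (or "pointer/addressing" style combination) of two gadgets whose purpose is, respectively, to force the approximate spectral norm to be large and to keep the quantum query complexity small. Concretely, I would start from a primitive whose approximate spectral norm is well understood and as large as possible relative to its query complexity: the natural candidate is (a variant of) the \emph{addressing} or \emph{indexing} function, or an $\XOR$/$\PARITY$-on-a-block construction, where a short "address" part of the input selects which of many parity-like subfunctions is evaluated. The point of such a construction is that to $1/3$-approximate it by a low-$\ell_1$-norm polynomial one must essentially approximate \emph{every} selected subfunction simultaneously, and spectral norm multiplies under the relevant combination, so $\log\snorm{F}$ scales like (number of address bits) $\times$ ($\log$ of the norm of one subfunction). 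If one subfunction is chosen so that each contributes a $\log(\text{something})$ worth of spectral-norm exponent, and there are $\Theta(Q(F))$ of them stacked, then $\log\asnorm{F}{1/3} = \Theta(Q(F)\log n)$ as desired; the free parameter controlling the number of stacked copies vs. the block size is exactly what lets us tune $Q(F)=\Theta(n^\delta)$ for any $\delta\in(0,1)$.

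The first real step is to fix the inner primitive and compute (or lower bound) its $1/3$-approximate spectral norm \emph{exactly enough} — I expect to use the dual characterization of $\asnorm{\cdot}{1/3}$ (as a maximization over functions $\psi$ correlated with $F$ but with small $\ell_\infty$ Fourier mass, i.e. the LP dual of the approximating-polynomial program) to prove the lower bound, and an explicit polynomial to prove the matching upper bound. The second step is a \emph{composition/tensorization lemma for approximate spectral norm}: I would show that when $F$ is formed by using $k$ copies of a base function on disjoint blocks, combined through a selector/address mechanism, then $\log\asnorm{F}{1/3} = \Theta\big(k \cdot \log\asnorm{g}{1/3'}\big)$ for a suitable error $1/3'$, handling error amplification so that the constant $1/3$ at the top is maintained (this is where a success-amplification or a careful choice of internal error thresholds enters). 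The third step is the query upper bound: I would design a bounded-error quantum query algorithm for $F$ that first reads/estimates the address part (cheaply, since it is short — $O(\log n)$ or $O(n^\delta)$ bits at most) and then runs $O(1)$ or $O(n^\delta)$ applications of amplitude amplification / Grover-type search on the selected block, giving $Q(F) = O(n^\delta)$; a matching lower bound on $Q(F)$ would come from an adversary or polynomial-method argument, or simply from the trivial embedding of $\PARITY$ or $\OR$ on $n^\delta$ bits as a subfunction.

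The main obstacle I anticipate is the \emph{tightness} of the spectral-norm lower bound under composition: approximate spectral norm does not tensorize as cleanly as, say, approximate degree or sign-rank, because the dual witnesses for a composed function must be built from dual witnesses for the parts while simultaneously respecting the $\ell_1$/$\ell_\infty$ duality \emph{and} the approximation error budget, and naive composition loses polynomial (rather than just constant) factors in the error, which after taking logs and multiplying by $k$ could swamp the $\Theta(Q(F)\log n)$ bound we are targeting. So the technical heart will be a composition theorem of the form "if $g$ has $\epsilon$-approximate spectral norm $\geq M$ then $F = \text{(selector of $k$ copies of $g$)}$ has $(1/3)$-approximate spectral norm $\geq M^{\Omega(k)}$," proved via an explicit dual-feasible point obtained by tensoring the $g$-duals and arguing the combined object still certifies a large norm after the error at the top is fixed to $1/3$. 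Once that lemma is in place, choosing the base function $g$ (likely something like $\XOR$ on $\log n$ bits, or a small inner-product gadget, so $\log\asnorm{g}{1/3} = \Theta(\log n)$) and stacking $k = \Theta(n^\delta/\log n)$ copies on blocks of size $\Theta(\log n)$ gives $n$ total input bits, $Q(F) = \Theta(n^\delta)$, and $\log\asnorm{F}{1/3} = \Theta(k\log n) = \Theta(n^\delta) = \Theta(Q(F))$ — wait, that is off by the $\log n$ factor, so in fact the base function should contribute a spectral-norm \emph{exponent} of $\Theta(\log n)$ per unit of query cost, meaning each stacked copy should cost $O(1)$ queries but contribute $\Theta(\log n)$ to $\log\asnorm{F}{1/3}$; this is exactly the Deutsch--Jozsa-like phenomenon ($Q=1$, spectral exponent $\log n$) mentioned in the introduction, and the construction amounts to making a \emph{total}, \emph{amplification-robust} version of $\DJ$ and then composing $\Theta(Q(F))$ independent copies of it. The final bookkeeping step is verifying that the resulting $F$ on $n$ variables has the claimed parameters and that the upper bound $\log\asnorm{F}{1/3} = O(Q(F)\log n)$ holds too (this direction is easier — it follows from the BCW simulation together with the Lee--Shraibman connection being essentially an equivalence up to the relevant factors, or directly from an explicit polynomial), completing the $\Theta$.
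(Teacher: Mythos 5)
There is a genuine gap, and it sits exactly where you placed your hopes: in the ``composition/tensorization lemma for approximate spectral norm'' and in the choice of base gadget. Your final plan is to take a \emph{total}, amplification-robust Deutsch--Jozsa-like gadget with $Q(g)=O(1)$ contributing $\Theta(\log n)$ to the spectral exponent, and XOR together $\Theta(Q(F))$ independent copies. But no such total gadget exists: a total Boolean function with bounded-error quantum query complexity $O(1)$ has exact degree $O(1)$ (polynomial relation between degree and $Q$ for total functions), and a degree-$d$ total function depends on at most $d2^{d-1}$ variables (Nisan--Szegedy), so it has spectral norm $2^{O(1)}$ and cannot contribute $\Omega(\log n)$ per copy. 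The $Q=1$ versus $\log n$ phenomenon of $\DJ$ is inherently a \emph{partial}-function phenomenon, so the gadgets must stay partial and the totalization must happen globally; then the query upper bound is no longer ``$O(1)$ per copy,'' because the algorithm must \emph{verify} the promise on all blocks. The paper's construction handles exactly this: each block is a Hadamard addressing gadget $\HADD_\ell$ (an $\ell$-bit address promised to be a Hadamard codeword selecting one of $\ell$ targets), the outer function is $\PARITY$, the completion outputs $-1$ on any promise violation, and the algorithm runs Bernstein--Vazirani once per block plus a single \emph{global} Grover search of cost $O(\sqrt{n})$ to verify all decoded codewords at once, which is why one needs $\delta\geq 1/2$ (with padding for smaller $\delta$). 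Your sketch of ``read the address part cheaply since it is short'' misses this: the address part of each block has $n^{1-\delta}$ bits and is cheap only because of the codeword structure, and the verification cost is the dominant term that must be amortized across blocks.

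The second issue is the lower-bound engine. You propose to prove a direct XOR/selector tensorization for $\asnorm{\cdot}{1/3}$ by tensoring dual witnesses, and you correctly note that naive tensoring destroys the error budget (correlation $(1-2\epsilon)^k$ forces $\epsilon\sim 1/k$); no fix for this is supplied, and such bounded-error direct-sum statements for the approximate spectral norm are not off-the-shelf. The paper avoids dual composition entirely: its key lemma (Lemma~\ref{lem: lifting}) lifts \emph{approximate degree} of the outer function to approximate spectral norm of the addressing composition, $\log\asnorm{f^{\ADD_{m,t}}}{1/3}=\Omega(\adeg(f)\log t)$, proved by a primal random-restriction argument --- under a uniformly random promise assignment of the address bits each target variable is selected with probability $1/t$, so relevant monomials of degree $\geq\adeg_{0.99}(f)$ in a purported low-norm approximant survive with probability at most $t^{-\adeg_{0.99}(f)}$, can be discarded after fixing a good assignment, and averaging out irrelevant variables yields a low-degree approximant of $f$, a contradiction. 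The cross-block ``amplification'' is then free, since $\adeg(\PARITY_{k})=k$, so the error-accumulation obstacle you anticipate never arises. To repair your proposal you would either have to prove the bounded-error XOR lemma you assume, or switch to a degree-based lifting argument of this kind; as written, both the gadget and the composition lemma fail.
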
 

There are not many techniques known to bound the approximate spectral norm of a function. This sentiment was expressed both in \cite{LS09} and in the work of Ada, Fawzi and Hatami \cite{AFH12}. On the other hand, classical approximation theory offers tools to prove bounds on a simpler and better known concept called \emph{approximate degree} which has been invaluable particularly for quantum query complexity. The $\epsilon$-approximate degree of $f$, denoted by $\adeg_{\epsilon}(f)$, is the minimum degree required by a real polynomial to uniformly approximate $f$ to error $\epsilon$ (see Definition~\ref{defn: adeg}). Recently, two of the authors \cite{CM17} devised a way of lifting approximate degree bounds to approximate spectral norm bounds. We first show here that technique works a bit more generally, to yield the following: let $\ADD_{m,t} : \pmone^m \to [t]$ be a (possibly partial) addressing function (see Definition~\ref{defn: add}). For any function $f:\pmone^n \to \pmone$, define the (partial) function $f^{\ADD_{m, t}} : \pmone^{n \times t} \times \pmone^{n \times m} \to \pmone$ as follows (formally defined in Definition~\ref{defn: addcomp}):
$$
f^{\ADD_{m, t}}\left(x,y\right) = f\left(x_{1,\ADD_{m, t}\left(y_1\right)}, x_{2,\ADD_{m, t}\left(y_2\right)},\ldots,x_{n,\ADD_{m, t}\left(y_n\right)}\right).
$$

Our main result on lower bounding the spectral norm is stated below.

\begin{lemma}[extending \cite{CM17}]\label{lem: lifting} \label{lemma:degree-norm-lifting}
Let $t > 1$ be any integer, $\ADD_{m,t}$ be any (partial) addressing function and $f : \pmone^n \to \pmone$ be any function. Then,
\[
   \log\left(\asnorm{f^{\ADD_{m,t}}}{1/3}\right) = \Omega\left(\adeg(f) \log t\right).
\]
\end{lemma}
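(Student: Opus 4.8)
The plan is to exhibit a "dual witness" for the approximate spectral norm lower bound, building on the dual-polynomial method familiar from approximate degree. First I would recall the LP-duality characterization: $\asnorm{g}{1/3} \geq \gamma$ iff there is a function $\psi : \pmone^N \to \R$ (a "dual object" in the Fourier domain) such that $\langle \psi, g \rangle - \frac13 \|\psi\|_\infty \cdot (\text{something}) \geq \dots$ — more precisely, the approximate spectral norm has a dual formulation in which a lower bound of $\gamma$ is witnessed by a dual function $\phi$ with $\|\phi\|_\infty \leq 1$, $\langle \phi, g\rangle > 1/3$-ish, and every Fourier coefficient $\wh\phi(S)$ small (the constraint dual to the $\ell_1$-norm of the Fourier spectrum is an $\ell_\infty$ constraint on the spectrum of the dual witness). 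So the real content is: starting from a dual witness $\psi$ for $\adeg(f) = d$ — i.e.\ $\psi : \pmone^n \to \R$ with $\|\psi\|_1 = 1$, $\langle \psi, f\rangle > 1/3$, and $\wh\psi(S) = 0$ for all $|S| < d$ (pure high degree $d$) — I would construct a dual witness for $f^{\ADD_{m,t}}$ whose Fourier coefficients are all bounded by roughly $t^{-d}$, which then yields $\log(\asnorm{f^{\ADD_{m,t}}}{1/3}) = \Omega(d \log t)$.

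The construction should combine $\psi$ with a per-coordinate gadget tailored to $\ADD_{m,t}$. For each of the $n$ input coordinates of $f$, the composed function replaces $x_i$ by $x_{i, \ADD_{m,t}(y_i)}$, a select-one-of-$t$ addressing gadget on the block $(x_{i,1},\dots,x_{i,t}, y_i)$. I would build a local dual witness $\mu$ for this gadget — something like a signed combination of point masses on the addressing structure — normalized so that $\langle \mu, \chi \rangle$ behaves correctly (it should "read off" the selected bit), with the key property that the Fourier mass of $\mu$ is spread across $\Theta(t)$ characters, so each individual coefficient is $O(1/t)$. Then I would take the tensor-style combination $\Psi = \widehat{\psi}$-weighted product of the $\mu_i$'s over the $n$ blocks: concretely, $\Psi(x,y) = c \sum_{S} \wh\psi(S) \prod_{i \in S} (\text{local gadget object on block } i) \prod_{i \notin S}(\text{"constant" object on block } i)$. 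Since $\psi$ has pure degree $\geq d$, every surviving $S$ contributes a product of at least $d$ local gadget objects, so each Fourier coefficient of $\Psi$ picks up a factor of $t^{-d}$. The correlation $\langle \Psi, f^{\ADD_{m,t}}\rangle$ should reduce, after summing over the addressing outputs, to $\langle \psi, f\rangle > 1/3$, and $\|\Psi\|_\infty$ should stay bounded — these are the standard "this tensor construction preserves correlation and boundedness" verifications, analogous to what is done in \cite{CM17} for the single addressing function, so I would lean on that machinery rather than redo it.

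The main obstacle — and the place where "extending \cite{CM17}" does real work — is handling a \emph{general} (and possibly \emph{partial}) addressing function $\ADD_{m,t}$ rather than the specific one used in the prior work, while keeping the local gadget's Fourier spectrum both (i) spread over $\Omega(t)$ characters so that the $t^{-d}$ savings is genuine, and (ii) compatible with the partiality, meaning the dual witness must be supported only on inputs in the promise and still certify the correlation bound. I expect the cleanest route is to define the local object purely in terms of the map $y_i \mapsto \ADD_{m,t}(y_i)$ composed with the $t$ candidate values $x_{i,1},\dots,x_{i,t}$, so that only the range $[t]$ — not the encoding $m$ — controls the spectrum; the bound then comes out as $\Omega(d\log t)$ independent of $m$, exactly as stated. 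A secondary check is that $\asnorm{\cdot}{1/3}$ of a partial function is defined via approximating polynomials that only need to be correct on the promise, so the duality I invoke at the start must be the partial-function version; this is routine but needs to be stated carefully so the $\Omega(\cdot)$ is not vacuous.
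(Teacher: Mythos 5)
Your proposal is correct in outline, but it takes a genuinely different route from the paper. The paper argues \emph{primally}, by contradiction: given a purported approximant $P$ of $f^{\ADD_{m,t}}$ with spectral norm below $2^{\frac{1}{10}\adeg_{0.99}(f)\log t}$, it restricts the address variables according to a product distribution supported on valid (non-$\star$) addresses, notes that each target bit is selected with probability $1/t$, so every monomial of target-degree at least $\adeg_{0.99}(f)$ survives as a ``relevant'' monomial with probability at most $t^{-\adeg_{0.99}(f)}$; by averaging one fixes an address assignment under which the total $\ell_1$-mass of relevant high-degree monomials is a small constant, drops those monomials, and averages out the unselected target variables, producing a polynomial of degree below $\adeg_{0.99}(f)$ that approximates $f$ to error $<0.99$ --- a contradiction (with error amplification via Lemma~\ref{lem: asnormequiv}). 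You instead build a \emph{dual} witness: compose a pure-high-degree dual polynomial for $\adeg(f)$ with a per-block gadget object so that every Fourier coefficient of the composed witness picks up a $t^{-d}$ factor, which is essentially the pattern-matrix/dual-block-composition machinery. Your route is heavier but yields an explicit witness that plugs directly into the generalized-discrepancy framework (so in principle it could certify the $Q^{cc}$ lower bound without invoking Theorem~\ref{thm: LS} as a black box), whereas the paper's restriction argument is shorter and needs no duality at all; note that the crux is identical in both proofs, namely the $1/t$ selection probability per target under a suitable distribution on valid addresses.

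Two points you should nail down. First, your stated dual characterization is off as written: the witness $\phi$ should be supported on the promise with $\|\phi\|_1$ (not $\|\phi\|_\infty$) normalized, and the bound has the form $\asnorm{F}{1/3}\ge\bigl(\langle\phi,F\rangle-\tfrac13\|\phi\|_1\bigr)/\max_S|\wh{\phi}(S)|$ up to normalization; relatedly, a dual witness for $\adeg_{1/3}(f)$ only guarantees correlation strictly above $1/3$ with possibly negligible margin, so you must first amplify (e.g.\ take a witness for $\adeg_{1/2}(f)=\Omega(\adeg(f))$ via Lemma~\ref{lem: asnormequiv}) to keep the numerator bounded below by a constant --- the analogue of the paper's use of $\adeg_{0.99}$. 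Second, for a general addressing gadget the preimage sizes $|g^{-1}(j)|$ need not be equal, so ``uniform over valid addresses'' does not automatically give selection probability exactly $1/t$; choose the block distribution by first picking $j\in[t]$ uniformly and then a valid address selecting $j$, which restores the $t^{-d}$ factor per high-degree monomial (the same care is needed, implicitly, in the paper's version).
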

The functions $F$ constructed for the proof of Theorem~\ref{thm: wtdeg} are completions of instances of $\PARITY^{\ADD_{\ell, \ell}}$, and hence Lemma~\ref{lem: lifting} yields lower bounds on the approximate spectral norm of $F$ in terms of the approximate degree of $\PARITY$ (which is known to be maximal).

For the upper bound on $Q(F)$ we use two famous query algorithms. These are Grover's search~\cite{Grover96} and the Bernstein-Vazirani algorithm~\cite{BV97}. The use of these algorithms for upper bounding $Q(F)$ is in the same taste as in the work of Ambainis and de Wolf~\cite{AdW14} although their motivation was quite different than ours. Interestingly, Ambainis and de Wolf used their function to pin down the minimal \emph{approximate degree} of a total Boolean function, all of whose input variables are influential. 

\subsection{Intuition behind the function construction}\label{sec: sketchfn}

\begin{itemize}
    \item From Theorem~\ref{thm: qtm_sim_log_loss} it is known that for all Boolean functions $f,~Q^{cc}(f \circ \oplus) \leq O(Q(f) \log n)$.
    \item In order to prove a matching lower bound, we construct a Boolean function $F$ on $n$ variables such that $\asnorm{F}{1/3} = 2^{\Omega(Q(F) \log n)}$ (Theorem~\ref{thm: wtdeg}). From Theorem~\ref{thm: LS}, this shows that $Q^{cc}(F \circ \oplus) = \Omega(\log \asnorm{F}{1/3}) = \Omega(Q(F) \log n)$.  We want to additionally ensure that $Q(F) = \Theta(n^{\delta})$ for a given constant $0 < \delta < 1$.  A formal definition of $F$ is given in Figure~\ref{fig: cexample}, we attempt to provide an overview on how we arrived at this function below.
    \item Assume $\delta$ is a constant that is least $1/2$, else the argument follows along similar lines by ignoring suitably many input variables when defining the function.
    A natural first attempt is to try to construct a composed function of the form $F = f^{\ADD}$, for some addressing function $\ADD$ (see Definition~\ref{defn: add}) with $\Omega(n^{1 - \delta})$ many target bits, for which $Q(f^{\ADD}) = \Theta(\adeg(f))$.  For the lower bound we use Lemma~\ref{lem: lifting} to show that $\log \asnorm{f^{\ADD}}{1/3} = \Omega(\adeg(f) \log (n^{1 - \delta})) = \Omega(\adeg(f) \log n)$.
    \item Given the upper bound target, we are led to a natural choice of addressing function.
    Let $\HADD_{n^{1 - \delta}}$ be the $(n^{1-\delta}, n^{1-\delta})$-addressing function defined as follows. 
    Fix an arbitrary order on the $n^{1-\delta}$-bit Hadamard codewords (see Definition~\ref{defn: hcwd}) , say $w_1, \dots, w_{n^{1-\delta}}$.
    Define $g$ to be the $\what$ function of $\HADD_{n^{1 - \delta}}$ such that $g(w_i) = i$ for all $i \in [n^{1-\delta}]$, and $g(x) = \star$ for $x \neq w_i$ for any $i \in [n^{1-\delta}]$. 
    \item For any function $f$ on $n^{\delta}/2$ bits, the \emph{partial} function $f^{\HADD_{n^{1 - \delta}}}$ on $n$ inputs has quantum query complexity $O(Q(f) + n^{\delta}/2)$, as we sketch in the next step. We select $f$ appropriately such that this is $\Theta(Q(f))$.  Finally, we define the \emph{total} function $F = f^{\HADD_{n^{1 - \delta}}}$ to be the completion of $f^{\HADD_{n^{1 - \delta}}}$ that evaluates to $-1$ on the non-promised inputs of $f^{\HADD_{n^{1 - \delta}}}$.
    \item We choose the outer function to be $f = \PARITY_{n^{\delta}/2}$ to ensure $Q(F) = \Theta(n^{\delta})$. To prove the upper bound on $Q(F)$, we crucially use the Bernstein-Vazirani and Grover's search algorithms. \begin{itemize}
        \item Run $n^{\delta}/2$ instances of the Bernstein-Vazirani algorithm, one on each block.  This guarantees that if the address variables were all Hadamard codewords, then we would receive the correct indices of the target variables with probability 1 and just $n^{\delta}/2$ queries.
        \item In the next step, we run Grover's search on two $n/2$-bit strings to test whether the output of the first step was correct.  If it was correct, we succeed with probability 1, and proceed to query the $n^{\delta}/2$ selected target variables and output the parity of them.  If it was not correct, Grover's search catches a discrepancy with probability at least $2/3$ and we output $-1$, succeeding with probability at least $2/3$ in this case.
        \item The $n^{\delta}/2$ invocations of the Bernstein-Vazirani algorithm use a total of $n^{\delta}/2$ queries, Grover's search uses another $O(\sqrt{n})$ queries, and the final parity (if Grover's search outputs that the strings are equal) uses another $n^{\delta}/2$ queries, for a cumulative total of $O(n^{\delta} + \sqrt{n}) = O(n^{\delta})$ queries (recall that we assume $\delta \geq 1/2$).
    \end{itemize} 
\end{itemize}

\begin{figure}\label{fig: cexample}
\begin{center}
\begin{tikzpicture}

\tikzstyle{gate}=[ellipse,draw=black]
\tikzstyle{input}=[]
\node(hdef) at (-6, -1){$F = $};
\node[gate] (root) at (0,0) {$\mathsf{PARITY}$};
\node[gate] (c1) at (-3,-2) {$\mathsf{HADD}_\ell$};
\node[gate] (cn) at (3,-2) {$\HADD_{\ell}$};
\node (dummy1) at (-1.8, -0.3) {};
\node (dummy2) at (1.9, -0.3) {$k/2$};
\draw[->, dashed] (dummy1) edge [bend right = 20] (dummy2);

\node (dummy3) at (-5, -2.7) {$2\ell$};
\node (dummy4) at (-1, -2.8) {};
\draw[<-, dashed] (dummy3) edge [bend right = 20] (dummy4);

\node[input] (x8) at (1,-2) {$\boldsymbol{\cdot}$};
\node[input] (x19) at (-1,-2) {$\boldsymbol{\cdot}$};
\node[input] (x419) at (0,-2) {$\boldsymbol{\cdot}$};

\node[input] (x11) at (-5.5,-4) {$x_{11}$};
\node[input] (x123) at (-4.5,-4) {$\cdots$};
\node[input] (x1f) at (-3.5,-4) {$x_{1\ell}$};
\node[input] (y11) at (-2.5,-4) {$y_{11}$};
\node[input] (y123) at (-1.5,-4) {$\cdots$};
\node[input] (y1f) at (-0.5,-4) {$y_{1\ell}$};

\node[input] (yn1) at (5.5,-4) {$y_{\frac{k}{2}\ell}$};
\node[input] (x13) at (4.5,-4) {$\cdots$};
\node[input] (ynf) at (3.5,-4) {$y_{\frac{k}{2}1}$};
\node[input] (xn1) at (2.5,-4) {$x_{\frac{k}{2}\ell}$};
\node[input] (y13) at (1.5,-4) {$\cdots$};
\node[input] (xnf) at (0.5,-4) {$x_{\frac{k}{2}1}$};

\draw[->] (c1) -- (root);
\draw[->] (cn) -- (root);

\draw[<-] (c1) -- (y11);
\draw[<-] (c1) -- (y1f);
\draw[<-] (c1) -- (x11);
\draw[<-] (c1) -- (x1f);

\draw[<-] (cn) -- (yn1);
\draw[<-] (cn) -- (ynf);
\draw[<-] (cn) -- (xn1);
\draw[<-] (cn) -- (xnf);

\draw[->] (root) -- ++(0,1);

\draw [
    thick,
    decoration={
        brace,
        mirror,
        raise=0.5cm
    },
    decorate
] (-5.8,-4) -- (-3.2, -4)
node [pos=0.5,anchor=north,yshift=-0.55cm] {Address bits}; 
\draw [
    thick,
    decoration={
        brace,
        mirror,
        raise=0.5cm
    },
    decorate
] (-2.8,-4) -- (-0.3, -4)
node [pos=0.5,anchor=north,yshift=-0.55cm] {Target bits}; 
\end{tikzpicture}
\end{center}
\caption{$k = n^{\delta}, \ell = n^{1 - \delta}$. If the address bits of an input to an $\HADD_\ell$ is the $j$'th Hadamard codeword, then $y_j$ is selected.  If on an input, there exists at least one $\HADD_\ell$ for which the address bits do not correspond to a Hadamard codeword, $F$ outputs $-1$.  Else it outputs the Parity of the $k/2$ selected target bits.}
\end{figure}
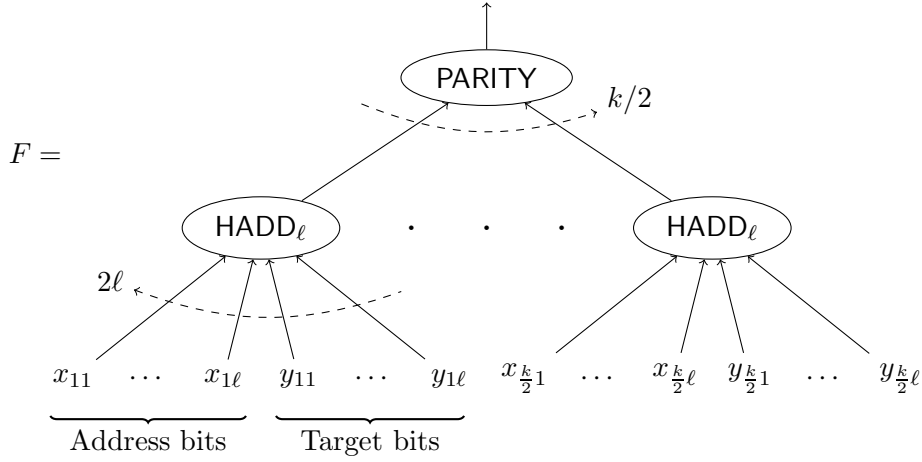

\subsection{Other implications of our result}
Zhang~\cite{Zhang09} showed that for all Boolean functions $f$, there must exists gadgets $g_i$, each either $\wedge$ or $\vee$, such that $Q^{cc}(f(g_1, \dots, g_n)) = \Omega(\textnormal{poly} (Q(f)))$.  For monotone $f$, they showed that either $Q^{cc}(f \circ \wedge) = \Omega(\textnormal{poly}(Q(f)))$ or $Q^{cc}(f \circ \OR_2) = \Omega(\textnormal{poly}(Q(f)))$.  They also state that it is unclear how tight the BCW simulation is.  We show that there exists a function for which it is tight up to constants (on composition with $\oplus$).

Another implication of our result is related to the Entropy Influence Conjecture, which is an interesting question in the field of analysis of Boolean functions, posed by Friedgut and Kalai~\cite{FK}. This conjecture is wide open for general functions. A much weaker version of this conjecture is called the Min-Entropy Influence Conjecture. For the statement of the conjecture we need to consider the Fourier expansion of $f:\pmone^n \to \pmone$ as
$$
    f(\textbf{x}) = \sum_{S\subseteq [n]}\widehat{f}(S)\chi_S(\textbf{x}),
$$ 
where $\bra{\chi_S : S \subseteq [n]}$ are the \emph{parity} functions ($\chi_S(\textbf{x}) = \Pi_{i\in S}x_i$, when $\textbf{x} = (x_1, \dots, x_n) \in \pmone^n$) and $\bra{\widehat{f}(S) : S \subseteq [n]}$ are the corresponding Fourier coefficients. 

\begin{conjecture}(Min-Entropy Influence Conjecture) For any Boolean function $f:\pmone^n \to \pmone$ there exists a non-zero Fourier coefficient $\widehat{f}(S)$ such that $$\log\left(1/|\widehat{f}(S)|\right) = O(I(f)),$$
where $I(f)$ denotes the influence (or average sensitivity) of $f$ ($I(f) = \sum_{S\subseteq [n]}|S|\widehat{f}(S)^2$).
\end{conjecture}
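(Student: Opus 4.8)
The plan is to \emph{not} attempt to prove this conjecture — it is stated as a known open problem, and the paper quotes it only to motivate a corollary. So what I would actually establish is the contrapositive-flavoured implication that the paper is driving toward: our total function $F$ (or a variant of it) shows that the constant hidden in the $O(I(f))$ of the Min-Entropy Influence Conjecture, \emph{if the conjecture is true}, cannot be replaced by something that also controls the approximate spectral norm, or equivalently that $\log(1/|\wh f(S)|)$ being small for the largest Fourier coefficient is not by itself enough to get a good handle on $Q^{cc}(F\circ\oplus)$. Concretely, the forward-looking claim I would prove is: for the function $F$ of Theorem~\ref{thm: extmain}, $I(F) = O(Q(F))$ while $\log(1/|\wh F(S)|) = \Omega(Q(F)\log n)$ for \emph{every} non-zero Fourier coefficient, so that $F$ is a natural candidate \emph{hard instance} stress-testing the conjecture's constant, and in particular the conjecture (if true) forces its implied constant to be $\Omega(\log n)$ relative to the naive bound one might hope for from spectral norm.

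First I would bound the total influence $I(F)$ from above. Since $I(f) \le \adeg_{1/3}(f)^2$ is too weak, I would instead use the standard fact $I(f) = O(\deg(f))$ is false in general but $I(f) \le s(f) \le$ (a polynomial in the decision-tree depth), and more cleanly $I(f) = O(Q(f)^2)$ via the polynomial method; better, I would argue directly from the block structure of $F$ in Figure~\ref{fig: cexample}. Because $F$ is a completion that evaluates to $-1$ off the Hadamard-codeword promise, and the promised region is a tiny $(2^{-\Omega(n^{1-\delta})}$-fraction$)$ slice, the bulk of the Fourier weight is concentrated on the indicator of ``some block is a non-codeword,'' whose influence I can estimate combinatorially; on the promised slice $F$ behaves like $\PARITY_{n^\delta/2}$ on the selected bits, contributing influence $O(n^\delta) = O(Q(F))$. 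Summing the two contributions gives $I(F) = O(Q(F))$ (or at worst $O(Q(F)\log n)$, which is still fine for the point being made).

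Next I would lower-bound $\log(1/|\wh F(S)|)$ over all $S$ with $\wh F(S)\neq 0$. The key observation is that $\min_{S:\wh F(S)\neq 0}\log(1/|\wh F(S)|) \ge \log(1/\max_S |\wh F(S)|)$, and $\max_S|\wh F(S)| \le \snorm{F}^{-1}\cdot(\text{number of relevant monomials})$ is the wrong direction; instead I use that if all non-zero Fourier coefficients had magnitude $\ge 2^{-k}$, then $\snorm{F} = \sum_S |\wh F(S)| $ could still be large, so this needs care. The clean route: the \emph{approximate} spectral norm bound $\log \asnorm{F}{1/3} = \Omega(Q(F)\log n)$ from Theorem~\ref{thm: wtdeg} implies, via a standard argument, that $F$ cannot be $\epsilon$-approximated by any sparse low-$\ell_1$-norm polynomial, and in particular there is a non-zero Fourier coefficient of $F$ that is exponentially small: specifically, using $\snorm{F}\ge\asnorm{F}{1/3}$ together with the fact that $F$ has $2^n$ possible monomials, an averaging argument gives a non-zero $\wh F(S)$ with $|\wh F(S)| \le \snorm{F}/2^n$, but that only gives $\log(1/|\wh F(S)|)\ge n - \log\snorm{F}$, not quite what we want. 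I would therefore instead invoke the min-entropy influence \emph{conjecture itself} as a hypothesis: \emph{if} the conjecture holds, then some $\wh F(S)$ satisfies $\log(1/|\wh F(S)|) = O(I(F)) = O(Q(F))$ — and I would contrast this permitted value $O(Q(F))$ with the ``expected'' value $\Omega(Q(F)\log n)$ suggested by the approximate spectral norm, concluding that $F$ is a function where the gap between min-Fourier-entropy and $\log$-approximate-spectral-norm is as large as $\Omega(\log n)$, i.e.\ the two quantities genuinely differ by a logarithmic factor on a total function.

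\textbf{Main obstacle.} The hard part is pinning down \emph{which} statement is actually being asserted and therefore provable: as literally worded, the displayed inequality is the open conjecture and I cannot prove it. The honest resolution — and what I would write — is that our contribution here is not a proof of the conjecture but the production of a concrete total $F$ for which $I(F) = O(Q(F))$ can be verified directly while $\log\asnorm{F}{1/3} = \Theta(Q(F)\log n)$, thereby furnishing the first total function on which the min-entropy influence conjecture's implied constant must absorb a full $\log n$ factor relative to the approximate-spectral-norm heuristic, which is the genuine implication the paper is flagging. The technical work reduces to the two influence estimates above; the influence \emph{upper} bound on $F$ (disentangling the codeword-promise indicator's Fourier weight from the parity part) is the step I expect to require the most care.
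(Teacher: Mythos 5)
You have correctly identified the crux: this statement is the Min-Entropy Influence Conjecture (a weakening of the Friedgut--Kalai entropy--influence conjecture), and the paper does not prove it --- it is quoted verbatim as an open conjecture purely to motivate the discussion of why the authors' main result kills one proposed route (via the approximate spectral norm, as in \cite{ACK+18}) to bounding the Fourier min-entropy by $O(Q(f))$. So declining to prove the displayed inequality is exactly the right call, and on the core question your proposal matches the paper: there is no proof to compare against.

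The surrounding program you sketch, however, drifts beyond what the paper claims and contains steps you yourself flag as dead ends. The paper's actual point is only that $\log \asnorm{F}{1/3}$ can be $\omega(Q(F))$ (indeed $\Theta(Q(F)\log n)$), so the chain ``min-entropy $\lesssim \log \asnorm{f}{1/3} \lesssim Q(f)$'' cannot work; it does not assert, and does not need, a lower bound of $\Omega(Q(F)\log n)$ on $\log(1/|\wh F(S)|)$ for every non-zero coefficient, and your attempted averaging argument for such a bound does not deliver it (as you note). Likewise the influence estimate $I(F)=O(Q(F))$ is not part of the paper; it happens to hold here for a much simpler reason than your Fourier-weight decomposition --- $F$ equals $-1$ outside an exponentially small promise region, so $I(F) \le 2n\Pr[F=1]$ is already $o(1)$ --- but it plays no role in the paper's argument. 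In short: your identification of the statement as unprovable-as-stated is correct and consistent with the paper; the additional claims are extraneous, and where they are sketched the reasoning is either unnecessary or not yet sound.
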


While this conjecture is also wide open, some attempts have been made to prove various implications of this conjecture. One interesting implication of the Min-Entropy Influence Conjecture that is still open is whether the min-entropy of the Fourier spectrum (that is, $\log\left(1/\max_{S \subseteq [n]}|\widehat{f}(S)|\right)$) is less than $O(Q(f))$. In \cite{ACK+18} using a primal-dual technique it was shown that the min-entropy of the Fourier spectrum is less than a constant times $\log(\|\hat{f}\|_{1,\epsilon})$, where the constant depends on $\epsilon$.  Thus if it was the case that $\log(\|\hat{f}\|_{1,\epsilon} = O(Q(f)))$, we would have upper bounded the min-entropy of Fourier spectrum by $O(Q(f))$. This was stated in \cite{ACK+18} as a possible approach and was left as an open problem. Our result in this paper nullifies this approach.

\section{Preliminaries}\label{sec: prelims}
For any positive integer $n$, we denote the set $\{1, \dots, n\}$ by $[n]$. For $d \leq n$ we use the notation $\binom{n}{\leq d} = \binom{n}{0} + \cdots + \binom{n}{d}$. Note that $\binom{n}{\leq d} < (n+1)^d$.

In this section we review the necessary preliminaries.  We first review some basics of Fourier analysis on the Boolean cube.
Consider the vector space of functions from $\pmone^n$ to $\R$, equipped with an inner product defined by
\[
\langle f, g \rangle := \E_{x \in \pmone^n}[f(x)g(x)] = \frac{1}{2^n}\sum_{x \in \pmone^n}f(x)g(x).
\]
for every $f,g: \pmone^n \rightarrow \R$. For any set $S \subseteq [n]$, define the associated \emph{parity} function $\chi_S$ by $\chi_S(x) = \prod_{i \in S}x_i$. The set of parity functions $\bra{\chi_S : S \subseteq [n]}$ form an orthonormal basis for this vector space.  Thus, every function $f : \pmone^n \to \R$ has a unique multilinear expression as
\[
f = \sum_{S \subseteq [n]}\wh{f}(S)\chi_S.
\]
The coefficients $\bra{\wh{f}(S) : S \subseteq [n]}$ are called the \emph{Fourier coefficients} of $f$.
\begin{fact}[Parseval's Identity]\label{fact: parseval}
For any function $f : \pmone^n \to \R$,
\[
    \sum_{S \subseteq [n]}\wh{f}(S)^2 = \frac{\sum_{x \in \pmone^n}f(x)^2}{2^n}.
\]
\end{fact}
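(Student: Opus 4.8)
The plan is to prove this directly from the orthonormality of the parity basis $\{\chi_S : S \subseteq [n]\}$ with respect to the inner product $\langle f, g \rangle = \E_{x \in \pmone^n}[f(x)g(x)]$, which was already recorded in the excerpt. The key fact I would invoke is that $\langle \chi_S, \chi_T \rangle = \indicator[S = T]$: indeed, $\chi_S(x)\chi_T(x) = \prod_{i \in S}x_i \prod_{i \in T}x_i = \prod_{i \in S \triangle T}x_i = \chi_{S \triangle T}(x)$ since $x_i^2 = 1$ for $x_i \in \pmone$, and $\E_{x}[\chi_U(x)] = \prod_{i \in U}\E_{x_i}[x_i] = 0$ unless $U = \emptyset$, in which case it equals $1$. (This orthonormality was asserted in the excerpt, so strictly I may just cite it.)

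Next I would compute $\langle f, f \rangle$ two ways. On one hand, by definition $\langle f, f \rangle = \E_{x \in \pmone^n}[f(x)^2] = \frac{1}{2^n}\sum_{x \in \pmone^n}f(x)^2$, which is exactly the right-hand side of the claimed identity. On the other hand, expanding $f$ in the Fourier basis as $f = \sum_{S \subseteq [n]}\wh{f}(S)\chi_S$ and using bilinearity of the inner product,
\[
\langle f, f \rangle = \sum_{S \subseteq [n]}\sum_{T \subseteq [n]}\wh{f}(S)\wh{f}(T)\langle \chi_S, \chi_T \rangle = \sum_{S \subseteq [n]}\wh{f}(S)^2,
\]
where the last equality uses $\langle \chi_S, \chi_T \rangle = \indicator[S=T]$. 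Equating the two expressions for $\langle f, f \rangle$ yields the identity.

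There is essentially no obstacle here; the only things to be careful about are that the multilinear Fourier expansion exists and is finite (so the interchange of summation is trivially justified — both sums are finite), and that $f$ is real-valued so $f(x)^2 = |f(x)|^2$ and all terms are nonnegative. Since the excerpt already establishes that every $f : \pmone^n \to \R$ has a unique expansion $f = \sum_S \wh{f}(S)\chi_S$ and that the $\chi_S$ form an orthonormal basis, the proof is a two-line consequence and I would present it as such.
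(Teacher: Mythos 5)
Your proof is correct: it is the standard two-line argument, computing $\langle f, f\rangle$ once directly from the definition of the inner product and once via the Fourier expansion and orthonormality of the parity functions. The paper states this as a fact without proof, and your argument is exactly the canonical justification one would supply, with the finiteness of the sums correctly noted as the reason no convergence issues arise.
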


\begin{definition}[Spectral Norm]\label{defn: wt}
For any function $f : \pmone^n \to \R$, define its \emph{spectral norm}, which we denote $\snorm{f}$, to be the sum of absolute values of the Fourier coefficients of $f$.  That is,
\[
\snorm{f} : = \sum_{S \subseteq [n]}\abs{\wh{f}(S)}.
\]
\end{definition}

\begin{definition}[Hadamard Codeword]\label{defn: hcwd}
If a $\ell$-bit string $(x_1, \dots, x_{\ell}) \in \pmone^\ell$ is of the form $x_S = \prod_{i \in S}z_i$ for all $S \subseteq [\log \ell]$ for some $z \in \pmone^{\log \ell}$, then define such an $x = x_1 \dots x_\ell$ to be the \emph{$\ell$-bit Hadamard codeword} $h(z)$ of the $(\log \ell)$-bit string $z$.
\end{definition}

\subsection{Addressing functions}
\begin{definition}[$(m,k)$-addressing function]\label{defn: add}
We define a (partial) function $f:\pmone^{m+k} \rightarrow \bra{-1, 1, \star}$ to be an \emph{$(m,k)$-addressing function} if there exists $g : \pmone^m \to \bra{[k] \cup \star}$ such that 
\begin{itemize}
    \item $f(x_1, \dots, x_m, y_1, \dots, y_{k}) = y_{g(x_1, \dots, x_m)}$ if $g(x_1, \dots, x_m) \in [t]$, and $f(x_1, \dots, x_m, y_1, \dots, y_{k}) = \star$ otherwise.
    \item For all $j \in [k]$, there exists $(x_1, \dots, x_m) \in \pmone^m$ such that $g(x_1, \dots, x_m) = j$.
\end{itemize}

We call the variables $\{x_1, \dots, x_m\}$ the \emph{address variables} and the variables $\{y_{1}, \dots, y_{k}\}$ the \emph{target variables}.  The function $g$ is called the $\what$ function of $f$.
\end{definition}

\begin{definition}[Indexing Function]
The Indexing function, which we denote by $\IND_k$, is a $(k, 2^k)$-addressing function defined by $\IND(x_1, \dots, x_k, y_1, \dots, y_{2^k}) = y_{\textnormal{bin}(x)}$, where $\textnormal{bin}(x)$ denotes the integer represented by the binary string $x_1, \dots, x_k$.
\end{definition}

\begin{definition}[Composition with addressing functions]\label{defn: addcomp}
For any function $f : \pmone^n \to \pmone$ and an $(m, k)$-addressing function $\ADD$, define the (partial) function $f^{\ADD} : \pmone^{n (m + k)} \to \bra{-1, 1, \star}$ by
\[
f^{\ADD}(x_1, y_1, \dots, x_n, y_n) = 
\begin{cases}
f(\ADD(x_1, y_1), \dots, \ADD(x_n, y_n)) & \text{if~} \ADD(x_i, y_i) \in \pmone \text{~for all~} i \in [n]\\
\star & \text{otherwise}.
\end{cases}
\]
where $x_i \in \pmone^{m}$ and $y_i \in \pmone^k$ for all $i \in [n]$.
\end{definition}

\begin{definition}[Hadamard Addressing Function]\label{defn: hadd}
We define the \emph{Hadamard addressing function}, which we denote $\HADD_\ell : \pmone^{2\ell} \to \bra{-1, 1, \star}$, as follows.
Fix an arbitrary order on the $\ell$-many Hadamard codewords of $(\log \ell)$-bit strings, say $w_1, \dots, w_\ell$.  Define the selector function of $\HADD_\ell$ by
\begin{align*}
g(x) & =\begin{cases}
i & \text{if~} x = w_i \text{~for some~} i \in [\ell]\\
\star & \text{otherwise}.
\end{cases}
\end{align*}
Note that $\HADD_\ell$ is an $(\ell, \ell)$-addressing function.
\end{definition}

\subsection{Polynomial approximation}

\begin{definition}[Approximate Degree]\label{defn: adeg}
The \emph{$\eps$-approximate degree} of a function $f : \pmone^n \to \bra{-1, 1, \star}$, denoted by $\adeg_\eps(f)$ is defined to be the minimum degree of a real polynomial $p : \pmone^n \to \R$ that satisfies $\abs{p(x) - f(x)} \leq \eps$ for all $x \in \pmone^n$ for which $f(x) \in \pmone$.\footnote{When dealing with partial functions, another notion of approximation is sometimes considered, where the approximating polynomial $p$ is required to have bounded values even on the non-promise inputs of $f$.  For the purpose of this paper, we do not require this constraint.} That is,
\[
\adeg_\eps(f) := \min\bra{d : \deg(p) \leq d, \abs{p(x) - f(x)} \leq \epsilon~\text{for all~} x \in \pmone^n~\text{for which}~f(x) \in \pmone}.
\]
\end{definition}
Henceforth, we will use the notation $\adeg(f)$ to denote $\adeg_{1/3}(f)$.

\begin{definition}[Approximate Spectral Norm]\label{defn: awt}
The \emph{approximate spectral norm} of a function  $f : \pmone^n \to \{-1,1,\star\}$, denoted by $\asnorm{f}{\epsilon}(f)$ is defined to be the minimum spectral norm of a real polynomial $p : \pmone^n \to \R$ that satisfies $\abs{p(x) - f(x)} \leq \eps$ for all $x \in \pmone^n$ for which $f(x) \in \pmone$.
\[
\asnorm{f}{\epsilon}(f) := \min\bra{\snorm{p} : \abs{p(x) - f(x)} \leq \epsilon~\text{for all~} x \in \pmone^n~\text{for which}~f(x) \in \pmone}.
\]
\end{definition}
\begin{lemma}[\cite{BNRW07}]\label{lem: asnormequiv} 
Let $f:\pmone^n \to \pmone$ be a total function. Then for all constants $0 < \delta, \eps < 1$ we have $\adeg_\eps(f) = \Theta(\adeg_\delta(f))$.
\end{lemma}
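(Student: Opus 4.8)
The plan is to prove the error-reduction statement for \emph{approximate degree}: given that $p$ uniformly approximates a total Boolean $f$ to error $\delta$, I want to produce a polynomial of degree $O_{\delta,\eps}(\deg p)$ approximating $f$ to error $\eps$, and symmetrically. The easy direction is going from small error to large error: if $p$ achieves error $\delta$ it trivially achieves error $\eps$ for any $\eps \geq \delta$, so that side needs no work. The content is the reverse: \emph{amplifying} a weak (say $\eps$-close, with $\eps$ a constant bounded away from $1/2$) approximation into a strong ($\delta$-close, with $\delta$ arbitrarily small) one, at the cost of only a constant factor in the degree. The standard tool for this is to compose with a univariate ``amplification polynomial'': there is a degree-$O(\log(1/\delta))$ polynomial $A_k : [-1,1] \to [-1,1]$ (built from a suitable truncation/Chebyshev-based construction, or more simply from the polynomial that maps a $\pm(1\!\pm\!\eps)$ neighborhood of $\pm 1$ to within $\delta$ of $\pm 1$) such that $|A_k(t) - \sgn(t)| \le \delta$ whenever $|\,|t|-1\,| \le \eps$. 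Actually, since $\delta$ and $\eps$ are both \emph{constants} here, $A_k$ has \emph{constant} degree, which is all we need.

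Concretely, the key steps, in order, would be: (1) Fix constants $0<\delta,\eps<1$; WLOG $\delta \le \eps$ (the other case is the trivial direction). Let $p$ be a real polynomial with $\deg p = \adeg_\eps(f)$ and $|p(x)-f(x)| \le \eps$ for all $x\in\pmone^n$. Since $f(x)\in\pmone$, we have $p(x) \in [1-\eps,1+\eps]$ or $p(x)\in[-1-\eps,-1+\eps]$ according to the sign of $f(x)$; in particular $p(x) \in [-1-\eps,\,1+\eps]$ and $|\,|p(x)|-1\,| \le \eps$ for every $x$. (2) Invoke the classical amplification polynomial: there is a constant $d = d(\delta,\eps) = O(\log(1/\delta))$ and a univariate polynomial $A$ of degree $d$ with $A([-1-\eps,-1+\eps]) \subseteq [-1-\delta,-1+\delta]$ and $A([1-\eps,1+\eps]) \subseteq [1-\delta,1+\delta]$ --- e.g.\ one can rescale the interval $[-1-\eps,1+\eps]$ to $[-1,1]$ and apply the standard polynomial (used in error reduction for $\adeg$, see e.g.\ the construction via iterated amplification of $3t^2-2t^3$-type maps, or Buhrman--Newman--R\"ohrig--de Wolf) that pushes each endpoint neighborhood exponentially close to $\pm 1$. (3) Set $q := A\circ p$. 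Then $\deg q \le d\cdot \deg p = O(\adeg_\eps(f))$, and for every $x$, $q(x)$ lies within $\delta$ of $\sgn(f(x)) = f(x)$, so $q$ is a $\delta$-approximating polynomial for $f$, giving $\adeg_\delta(f) \le d\cdot\adeg_\eps(f) = O(\adeg_\eps(f))$. Combined with the trivial inequality $\adeg_\eps(f)\le\adeg_\delta(f)$ (valid since $\eps \ge \delta$), we get $\adeg_\eps(f) = \Theta(\adeg_\delta(f))$.

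The main obstacle --- really the only place any care is needed --- is the precise construction and guarantee of the univariate amplifier $A$. One must ensure that $A$ simultaneously handles \emph{both} sign classes (a polynomial, unlike the sign function, cannot jump), which is why the construction is phrased as ``maps an $\eps$-neighborhood of $\pm1$ into a $\delta$-neighborhood of $\pm1$'' rather than anything involving behavior near $0$; there is no input $x$ with $p(x)$ near $0$, so the behavior of $A$ on $[-1+\eps, 1-\eps]$ is irrelevant and imposes no constraint. The cleanest route is to take the known ``robustification/amplification'' polynomial of degree $O(\log(1/\delta))$ that maps $[1-\eps,1+\eps]\to[1-\delta,1+\delta]$ and is odd (so it automatically maps $[-1-\eps,-1+\eps]\to[-1-\delta,-1+\delta]$); such a polynomial exists by a standard Chebyshev argument, and since here $\delta,\eps$ are constants its degree is $O(1)$, so the degree blow-up is a genuine constant factor. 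Everything else is bookkeeping.

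Remark: one should also note that the lemma is stated only for \emph{total} $f$; for total $f$ there is no non-promise input, so no subtlety about the value of $q$ off the promise arises, and the argument above applies verbatim.
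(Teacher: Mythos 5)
Your argument is correct and is essentially the standard proof: the paper does not prove this lemma itself but cites \cite{BNRW07}, and the error-reduction argument there is exactly your composition with a constant-degree odd univariate amplifier mapping the $\eps$-neighborhoods of $\pm 1$ into $\delta$-neighborhoods of $\pm 1$, combined with the trivial monotonicity direction. The only cosmetic point is that the amplifier's degree depends on the separation $1-\eps$ as well as on $\log(1/\delta)$ (your $O(\log(1/\delta))$ suppresses this), but since both $\delta$ and $\eps$ are constants this does not affect the conclusion.
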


The following is a standard upper bound on the approximate spectral norm of a Boolean function in terms of its approximate degree.
\begin{claim}\label{claim: adegnaivelb}
For all total functions $f : \pmone^n \to \pmone$, we have 
\[
\log \asnorm{f}{1/3}(f) = O(\adeg(f) \log n).
\]
\end{claim}
\begin{proof}
Let $d$ denote the approximate degree of $f$.  Take any 1/3-approximating polynomial of degree $d$, say $p$, to $f$.  Then,
\begin{align*}
    \sum_{S \subseteq [n]} \abs{\wh{p}(S)} & \leq \sqrt{\binom{n}{\leq d}} \cdot \sqrt{\sum_{S: |S| \leq d}\wh{p}(S)^2} \leq 4/3 \cdot (n+1)^{d/2} = 2^{O(d \log n)},
\end{align*}

where the first inequality follows by the Cauchy-Schwarz inequality, the second inequality follows by Parseval's identity (Fact~\ref{fact: parseval}) and the fact that the absolute value of $p$ is at most $4/3$ for any input $x \in \pmone^n$.
\end{proof}

It is easy to exhibit functions $f : \pmone^n \to \pmone$ such that $\log \asnorm{f}{1/3}(f) = \Omega(\adeg(f))$.  Bent functions satisfy this bound, for example.

Building upon ideas in~\cite{KP97}, the approximate spectral norm of $f \circ \IND_1$ was shown to be bounded below by $2^{\Omega(\adeg(f))}$ in~\cite{CM17}.
\begin{theorem}[\cite{CM17}]
Let $f:\pmone^n \rightarrow \pmone$ be any function. Then $\asnorm{f}{1/3}(f \circ \IND_1) \geq 2^{c \cdot \adeg_{2/3}(f)}$ for any constant $c < 1 - 3/\adeg_{2/3}(f)$.
\end{theorem}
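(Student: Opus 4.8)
The plan is to run the argument in the contrapositive direction: from a low-spectral-norm polynomial approximating $f \circ \IND_1$ I will build a low-degree polynomial approximating $f$. Fix an optimal approximator $p$, which without loss of generality is multilinear, satisfying $\abs{p(w) - (f\circ\IND_1)(w)} \le 1/3$ for all $w \in \pmone^{3n}$ and $\snorm{p} = W := \asnorm{f}{1/3}(f\circ\IND_1)$, and write $p = \sum_m c_m \chi_m$ with $\sum_m \abs{c_m} = W$. Each block of the input to $f\circ\IND_1$ consists of one address bit $x^{(i)}$ and two target bits $y^{(i)}_1, y^{(i)}_2$, and the $i$-th copy of $\IND_1$ uses $x^{(i)}$ to select one of the two target bits. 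For each $z \in \pmone^n$ I define a random restriction $\rho = \rho(z)$: independently over $i \in [n]$, draw $x^{(i)}$ and an auxiliary junk bit $r_i$ uniformly from $\pmone$, set the target bit selected by $x^{(i)}$ equal to $z_i$, and set the other target bit equal to $r_i$. By construction the $i$-th gadget always outputs $z_i$, so $(f\circ\IND_1)(\rho(z)) = f(z)$ pointwise over the randomness; consequently the function $q(z) := \E_\rho\big[p(\rho(z))\big]$ satisfies $\abs{q(z) - f(z)} \le \E_\rho\,\abs{p(\rho(z)) - (f\circ\IND_1)(\rho(z))} \le 1/3$ for every $z$.

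The heart of the proof is to show that $q$ is a polynomial of degree $O(\log W)$ in the $z$-variables. Since the restriction is a product of independent per-block experiments, for any monomial $m$ the expectation $\E_\rho[\chi_m(\rho(z))]$ factors as a product over the blocks of the expectation of the corresponding local monomial on $\{x^{(i)}, y^{(i)}_1, y^{(i)}_2\}$. A direct computation of these expectations (using that, under $\rho$, exactly one of $y^{(i)}_1, y^{(i)}_2$ equals $z_i$ and the other equals the uniform $r_i$, with the choice governed by the uniform $x^{(i)}$) gives: the empty local monomial contributes $1$; each of $y^{(i)}_1, y^{(i)}_2, x^{(i)}y^{(i)}_1, x^{(i)}y^{(i)}_2$ contributes $\pm z_i/2$; and $x^{(i)}$, $y^{(i)}_1 y^{(i)}_2$, $x^{(i)}y^{(i)}_1 y^{(i)}_2$ each contribute $0$. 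Hence each monomial $m$ of $p$ maps under $\rho$ either to $0$ or to $\pm 2^{-\abs{S_m}}\chi_{S_m}(z)$, where $S_m \subseteq [n]$ is the set of blocks on which $m$ uses one of the four informative local monomials. Summing over $m$ and grouping by $S$, we obtain $q = \sum_{S}\wh{q}(S)\chi_S$ with $\abs{\wh{q}(S)} \le 2^{-\abs{S}}\sum_{m : S_m = S}\abs{c_m}$, so for any integer $d$,
\[
\sum_{\abs{S} > d}\abs{\wh{q}(S)} \;\le\; 2^{-(d+1)}\sum_{m}\abs{c_m} \;=\; 2^{-(d+1)}W .
\]

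Now take $d := \lceil \log_2(3W/2)\rceil$ and let $q' := \sum_{\abs{S}\le d}\wh{q}(S)\chi_S$. Then $\deg(q') \le d$ and $\|q - q'\|_\infty \le 2^{-(d+1)}W \le 1/3$, so $q'$ approximates $f$ to error $1/3 + 1/3 = 2/3$, whence $\adeg_{2/3}(f) \le d < \log_2(3W)$. Exponentiating gives $W > 2^{\adeg_{2/3}(f) - \log_2 3} > 2^{\adeg_{2/3}(f) - 3}$, and since $2^{\adeg_{2/3}(f) - 3} = 2^{(1 - 3/\adeg_{2/3}(f))\,\adeg_{2/3}(f)}$, we conclude $W \ge 2^{c\cdot\adeg_{2/3}(f)}$ for every constant $c < 1 - 3/\adeg_{2/3}(f)$, which is exactly the claim.

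The one delicate point I anticipate is the middle paragraph: carefully verifying the list of local-monomial expectations — in particular that the two- and three-variable local monomials within a block either vanish in expectation or introduce no further dependence on $z_i$ — and confirming that the independence of the block-wise randomness makes the global projection the exact product of the local projections. This is what forces the degree of $q$ in $z$ to be controlled by $\abs{S_m}$ and what makes each contributing block cost a factor $\tfrac12$ in the coefficient; given this structural fact, the random-restriction step and the final truncation-and-constant chase are routine.
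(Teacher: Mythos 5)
Your proof is correct, and it follows the same basic template as the paper's degree-to-norm lifting (the proof of Lemma~\ref{lem: lifting}, which generalizes the cited \cite{CM17} result): average the gadget inputs so that a low-spectral-norm approximant of the composed function collapses to a low-degree approximant of $f$. The execution differs in a way worth noting. The paper's argument picks a distribution only over the address bits, uses a Markov-style averaging step to \emph{fix} one good address assignment under which the $\ell_1$-mass of high-degree relevant monomials is small, discards those monomials, and only then averages out the unselected target variables; this gives a contradiction against $\adeg_{0.99}(f)$. You instead keep the full expectation over address bits and unselected targets simultaneously, compute the exact per-block projections (the $1$, $\pm z_i/2$, $0$ table, which I verified is right for $\IND_1$), and conclude the clean coefficient decay $|\wh{q}(S)| \le 2^{-|S|}\cdot(\text{mass of monomials mapping to } S)$, after which a tail truncation at degree $d \approx \log_2 W$ finishes the job. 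Your route avoids having to fix a good restriction, makes the degree bound quantitative via the geometric damping rather than a union-bound over monomial survival, and in fact yields a slightly stronger constant (you get $W > 2^{\adeg_{2/3}(f)}/3$, i.e.\ the statement with any $c < 1 - \log_2(3)/\adeg_{2/3}(f)$), while the paper's fix-and-drop version generalizes more directly to arbitrary $(m,t)$-addressing gadgets where the per-block expectation is less clean. The only caveats are cosmetic: the error accounting $1/3 + 1/3 = 2/3$ relies on the definition of $\adeg_{2/3}$ allowing error exactly $2/3$ (which the paper's Definition~\ref{defn: adeg} does), and the statement is vacuous/ill-posed when $\adeg_{2/3}(f) = 0$, which both proofs implicitly ignore.
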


\subsection{Communication complexity}
The classical model of communication complexity was introduced by Yao in~\cite{Yao79}. In this model two parties, say Alice and Bob, wish to compute a function whose output depends on both their inputs. Alice is given an input $x \in \mathcal{X}$, Bob is given $y \in \mathcal{Y}$, and they want to jointly compute the value of a given function $F(x,y)$ by communicating with each other.  Alice and Bob individually have unbounded computational power and the number of bits communicated is the resource we wish to minimize. Alice and Bob communicate using a \emph{protocol} that is agreed upon in advance. In the randomized model, Alice and Bob have access to unlimited public random bits and the goal is to compute the correct value of $F(x,y)$ with probability at least $2/3$ for all inputs $(x,y) \in \mathcal{X} \times \mathcal{Y}$. The \emph{bounded-error randomized communication complexity} of a function $F$, denoted $R^{cc}(F)$, is the number of bits that must be communicated in the worst case by any randomized protocol to compute the correct value of the function $F(x,y)$, with probability at least $2/3$, for every $(x,y) \in \mathcal{X} \times \mathcal{Y}$.

The quantum model of communication complexity was introduced by Yao in~\cite{Yao93}. 
We refer the reader to the survey~\cite{Wolf02} for details. The \emph{bounded-error quantum communication complexity} of a function $F$, denoted $Q^{cc}(F)$ is the number of bits that must be communicated by any quantum communication protocol in the worst case to compute the correct value of the function $F(x,y)$, with probability at least $2/3$, for every $(x,y)$ in domain of $F$. Buhrman, Cleve and Wigderson~\cite{BCW98} observed a quantum simulation theorem, which gives an upper bound on the bounded-error quantum communication complexity of a composed function of the form $f \circ \wedge$ or $f \circ \oplus$ in terms of the bounded-error quantum query complexity of $f$ (see Theorem~\ref{thm: qtm_sim_log_loss}).

Lee and Shraibman~\cite{LS09} showed that the bounded-error quantum communication complexity of $f \circ \oplus$ is bounded below by the logarithm of the approximate spectral norm of $f$. Also see~\cite{CM17} for an alternate proof.
\begin{theorem}[\cite{LS09}]\label{thm: LS}
For any Boolean function $f : \pmone^n \to \pmone$,
\[
Q^{cc}(f \circ \oplus) = \Omega(\log \asnorm{f}{1/3}).
\]
\end{theorem}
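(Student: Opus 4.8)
The plan is to route the bound through the (approximate) trace norm of the communication matrix of $f\circ\oplus$, using the fact that XOR-functions have an unusually rigid spectral structure. Let $M$ be the $2^n\times 2^n$ matrix indexed by $\pmone^n\times\pmone^n$ with $M[x,y]=f(x\cdot y)$, where $x\cdot y$ denotes the coordinatewise product; this is precisely the communication matrix of $f\circ\oplus$. Expanding $f=\sum_S\wh f(S)\chi_S$ and using $\chi_S(x\cdot y)=\chi_S(x)\chi_S(y)$ gives $M=\sum_{S\subseteq[n]}\wh f(S)\,v_Sv_S^{\top}$, where $v_S=(\chi_S(x))_{x\in\pmone^n}$. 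Since the $v_S$ are pairwise orthogonal with $\|v_S\|_2=2^{n/2}$, this is, up to signs, a singular value decomposition of $M$: its singular values are exactly $\{2^n|\wh f(S)|\}_S$, and, more generally, any matrix of the form $\sum_S c_S v_Sv_S^{\top}$ has trace norm $2^n\sum_S|c_S|$.

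The first step is to show that $\|M\|_{\mathrm{tr}}^{\epsilon}$, the minimum trace norm of a matrix $\widetilde M$ all of whose entries are within $\epsilon$ of the corresponding entries of $M$, equals $2^n\,\asnorm{f}{\epsilon}$. The inequality ``$\le$'' is immediate: a spectral-norm-optimal $\epsilon$-approximating polynomial $p$ of $f$ gives $\widetilde M=\sum_S\wh p(S)\,v_Sv_S^{\top}$, which $\epsilon$-approximates $M$ entrywise and has trace norm $2^n\snorm p=2^n\asnorm{f}{\epsilon}$. For ``$\ge$'' I would symmetrize. Given any entrywise $\epsilon$-approximation $\widetilde M$ of $M$, consider its translates $\widetilde M^{(z)}$ defined by $\widetilde M^{(z)}[x,y]=\widetilde M[x\cdot z,\,y\cdot z]$, for $z\in\pmone^n$. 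Each $\widetilde M^{(z)}$ is obtained from $\widetilde M$ by a simultaneous permutation of rows and of columns, so $\|\widetilde M^{(z)}\|_{\mathrm{tr}}=\|\widetilde M\|_{\mathrm{tr}}$, and hence, by the triangle inequality, $\overline M:=2^{-n}\sum_z\widetilde M^{(z)}$ satisfies $\|\overline M\|_{\mathrm{tr}}\le\|\widetilde M\|_{\mathrm{tr}}$. Using $(x\cdot z)\cdot(y\cdot z)=x\cdot y$ one checks that $\overline M$ still $\epsilon$-approximates $M$ entrywise and that $\overline M[x,y]$ depends only on $x\cdot y$; therefore $\overline M=\sum_S\wh g(S)\,v_Sv_S^{\top}$, where $g(w):=\overline M[w,\mathbf 1]$ is itself an $\epsilon$-approximating polynomial of $f$. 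Thus $\|\widetilde M\|_{\mathrm{tr}}\ge\|\overline M\|_{\mathrm{tr}}=2^n\snorm g\ge 2^n\asnorm{f}{\epsilon}$, which gives ``$\ge$''.

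The final step is to invoke the approximate-trace-norm (equivalently, approximate-$\gamma_2$) lower bound for bounded-error quantum communication complexity, due to Razborov and, in its factorization-norm formulation, to Linial and Shraibman: for any $N\times N$ sign matrix $A$ and any $\epsilon\in(0,1/2)$, the $\epsilon$-error quantum communication complexity $Q^{cc}_\epsilon(A)$ is $\Omega\big(\log(\|A\|_{\mathrm{tr}}^{c\epsilon}/N)\big)$ for some absolute constant $c\ge 1$. Taking $A=M$ and $N=2^n$ and combining with the identity from the previous step gives $Q^{cc}_\epsilon(f\circ\oplus)=\Omega(\log\asnorm{f}{c\epsilon})$. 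Choosing $\epsilon$ to be a small enough constant that $c\epsilon\le1/3$, and then using that bounded-error quantum communication complexity is robust to the choice of constant error (amplify $1/3$ down to $\epsilon$ by a majority vote over $O(1)$ independent runs, at the cost of a constant factor), yields $Q^{cc}(f\circ\oplus)=\Omega(\log\asnorm{f}{1/3})$. The genuinely hard ingredient here is the last one: the spectral picture of XOR-functions and the symmetrization are routine linear algebra, whereas the approximate-trace-norm lower bound itself rests on the decomposition of a $c$-qubit protocol's acceptance-probability matrix into $2^{O(c)}$ rank-one terms (Yao, Kremer, Razborov) together with a trace-norm bound for matrices of small approximate rank. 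Citing it keeps the proof short; reproving it is where the real work lies.
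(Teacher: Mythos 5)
This theorem is not proved in the paper at all: it is imported as a black box from Lee and Shraibman \cite{LS09} (with \cite{CM17} mentioned as an alternate proof). Your reconstruction is correct and is essentially the standard argument behind the cited result: the character decomposition $M=\sum_S\wh f(S)v_Sv_S^{\top}$ of the XOR communication matrix, the symmetrization showing that the entrywise-$\epsilon$ approximate trace norm of $M$ is exactly $2^n\asnorm{f}{\epsilon}$ (your averaging over simultaneous translates is sound), and then the approximate factorization-norm/trace-norm lower bound on bounded-error quantum communication, finished off with standard error reduction. The one honest caveat, which you yourself flag, is that the last ingredient --- the $\gamma_2^{\alpha}$ (equivalently approximate trace norm) lower bound of Linial--Shraibman/Razborov --- is invoked rather than proved, and that is where all the real work of \cite{LS09} sits; since the paper under review likewise treats the entire statement as a citation, this is an acceptable level of detail, and everything you do prove (the spectral picture of XOR functions and the equality $\|M\|_{\mathrm{tr}}^{\epsilon}=2^n\asnorm{f}{\epsilon}$) is accurate.
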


\section{Proof of Theorem~\ref{thm: extmain}}\label{sec: main}
In this section, we prove Theorem~\ref{thm: extmain}.  We first formally define the function we use.

\subsection{Definition of the function}\label{sec: Fdefn}
If $\delta < 1/2$, then ignore the last $n - 2n^{2\delta}$ bits of the input, and define the following function on the first $2n^{\delta}$ bits of the input. The same argument as in Sections~\ref{sec: upper} and~\ref{sec: lower} give the required bounds for Theorem~\ref{thm: extmain} and Theorem~\ref{thm: wtdeg}.
Hence, we may assume without loss of generality that $\delta \geq 1/2$.

Define the partial function $f : \pmone^n \to \bra{-1, 1, \star}$ by $f = \PARITY_{n^{\delta}/2}^{\HADD_{n^{1 - \delta}}}$.
Define $F$ to be the completion of $f$ that evaluates to $-1$ on the non-promise domain of $f$ (see Figure~\ref{fig: cexample}).

\subsection{Upper bound}\label{sec: upper}
In this section, we prove the following.
\begin{claim}\label{claim: ndelta}
For $F: \pmone^{n} \to \pmone$ defined as in Section~\ref{sec: Fdefn}, we have
\[
Q(F) = \Theta(n^{\delta}).
\]
\end{claim}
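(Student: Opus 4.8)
The plan is to prove matching upper and lower bounds of $\Theta(n^{\delta})$ on $Q(F)$, following the sketch given in Section~\ref{sec: sketchfn}.

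\textbf{Lower bound.} This is the easy direction. The function $F$ contains, as a subfunction obtained by a suitable restriction, the function $\PARITY_{n^{\delta}/2}$: fix the address blocks so that each $\HADD_{n^{1-\delta}}$ points to a fixed target coordinate, and leave those $n^{\delta}/2$ target variables free. On this restriction $F$ computes the parity of $n^{\delta}/2$ bits. Since $Q(\PARITY_m) = \Theta(m)$ (the $\Omega(m)$ bound follows e.g.\ from the polynomial method, as $\adeg(\PARITY_m) = m$), and quantum query complexity does not increase under restrictions, we get $Q(F) = \Omega(n^{\delta})$.

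\textbf{Upper bound.} Here I would give the explicit $O(n^{\delta})$-query algorithm outlined in Section~\ref{sec: sketchfn}, in three phases. Phase 1: on each of the $k/2 = n^{\delta}/2$ blocks, run the Bernstein--Vazirani algorithm on the $\ell = n^{1-\delta}$ address bits to recover, with one query each (and with certainty when the block's address string is a genuine Hadamard codeword $h(z)$), the $(\log \ell)$-bit string $z_j$, hence a candidate target index $g(w_{i_j}) = i_j$ for each $j$. Phase 2: verify that each recovered address block really equals the claimed Hadamard codeword. Concatenate the $k/2$ address blocks into one string of length $(k/2)\ell = n/2$; the candidate indices $z_j$ from Phase 1 determine a single ``predicted'' string of the same length whose $j$-th block is $h(z_j)$. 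Run Grover's search over these $n/2$ positions for a coordinate where the actual input disagrees with the predicted string. If a disagreement is found, some block was not the claimed codeword, so the input is in the non-promise domain and we correctly output $-1$; this happens with probability $\geq 2/3$ when a disagreement exists, and Grover uses $O(\sqrt{n})$ queries. If no disagreement is found, then with certainty all address blocks are Hadamard codewords and the indices $i_j$ are exactly the selected-target indices. Phase 3: in that case, query the $k/2$ selected target variables $y_{j, i_j}$ (another $k/2 = n^{\delta}/2$ queries) and output their parity. Total query count: $n^{\delta}/2 + O(\sqrt{n}) + n^{\delta}/2 = O(n^{\delta} + \sqrt{n}) = O(n^{\delta})$, using $\delta \geq 1/2$.

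\textbf{Main obstacle.} The routine arithmetic is trivial; the care needed is in the correctness/error analysis of the case split. One has to argue that whenever \emph{every} address block is a Hadamard codeword, Phases~1 and~2 both succeed with probability $1$ (Bernstein--Vazirani is exact on Hadamard inputs, and Grover then finds no discrepancy since the predicted string equals the input), so the output in Phase~3 is exactly $F$'s value, which is the parity of the selected targets. Conversely, if \emph{some} address block is not a Hadamard codeword, then $F$'s value is $-1$ by definition, and one must check that the predicted string (whatever garbage Phase~1 produced on the bad blocks) genuinely differs from the input in at least one coordinate, so that Grover's search detects a discrepancy with probability $\geq 2/3$ and we output $-1$. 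The one subtlety to get right is that Bernstein--Vazirani on a non-codeword block could in principle output a $z$ with $h(z)$ equal to the input block on the queried positions only if the block \emph{is} $h(z)$ — so on a non-codeword block the predicted block $h(z_j)$ must differ from the actual block somewhere, guaranteeing Grover has something to find. I would state this as a short lemma and then conclude $Q(F) = O(n^{\delta})$, which together with the lower bound gives Claim~\ref{claim: ndelta}.
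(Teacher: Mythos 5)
Your proposal is correct and follows essentially the same route as the paper's proof: the lower bound by restricting the address blocks so that $F$ computes $\PARITY_{n^{\delta}/2}$, and the upper bound by running Bernstein--Vazirani on each address block, using Grover's search over the $n/2$ address bits to detect any disagreement with the predicted Hadamard codewords (noting, as you do, that a non-codeword block necessarily differs from any $h(z_j)$), and finally querying the $k/2$ selected targets, for a total of $O(n^{\delta}+\sqrt{n})=O(n^{\delta})$ queries when $\delta\geq 1/2$. No gaps; your explicit handling of the non-codeword case is exactly the correctness argument the paper gives.
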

The upper bound follows along the lines of a proof in~\cite{AdW14}, and the lower bound just uses the fact that $F$ is at least as hard as $\PARITY_{n^{\delta}/2}$.
\begin{proof}
The following is an $O(n^{\delta})$-query quantum algorithm computing $F$.  For convenience, set $\ell = n^{1-\delta}$ and $k = n^{\delta}$.
Note that since $\delta \geq 1/2$, we have $k = \Omega(\ell)$.
\begin{enumerate}
    \item Run $k/2$ instances of Bernstein-Vazirani algorithm on inputs $(x_{11}, \dots, x_{1\ell}), \dots, (x_{\frac{k}{2}1}, \dots, x_{\frac{k}{2}\ell})$ to obtain $k/2$ strings $z_{i_1}, \dots, z_{i_{k/2}}$.
    \item Run Grover's search~\cite{Grover96, BHMT02} to check equality of the two strings: $h(z_{i_1}), \dots, h(z_{i_\ell})$ and  $x_{11}, \dots, x_{1\ell}, \dots, x_{\frac{k}{2}1}, \dots, x_{\frac{k}{2}\ell}$, i.e.~to check whether the addressing bits of the input are indeed all Hadamard codewords which are output by the first step.
    \item If the step above outputs that the strings are equal, then query the $k/2$ selected variables and output their parity.  Else, output $-1$.
    \begin{itemize}
        \item If the input was indeed of the form as claimed in the first step, then Bernstein-Vazirani outputs the correct $z_{i_1}, \dots, z_{i_\ell}$ with probability 1, and Grover's search verifies that the strings are equal with probability 1.  Hence the algorithm is correct with probability 1 in this case.
        \item If the input was not of the claimed form, then the two strings for which equality is to be checked in the second step are not equal.  Grover's search catches a discrepancy with probability at least $2/3$.  Hence, the algorithm is correct with probability at least $2/3$ in this case.
        \end{itemize}
\end{enumerate}
The correctness of the algorithm is argued above, and the cost is $k/2$ queries for the first step, $O(\sqrt{k\ell})$ queries for the second step, and at most $k/2$ for the third step.
Thus, we have
\[
Q(F) = O(k + \sqrt{k\ell}) = O(k),
\]
since $k = \Omega(\ell)$.  The upper bound in the lemma follows.

For the lower bound, we argue that $F$ is at least as hard as Parity on $k/2$ inputs. To see this formally, set all the address variables such that the selected target variables are the first target variable in each block.  Under this restriction, $F$ equals $\mathsf{PARITY}(y_{11}, \dots, y_{\frac{k}{2}1})$.  Thus any quantum query algorithm computing $F$ must be able to compute $\mathsf{PARITY}_{k/2}$, and thus $Q(F) \geq k/2$.
\end{proof}

\begin{remark}\label{rmk: outer}
The same argument as above works when the function $f$ is defined to be $g^{\HADD_\ell}$ for any $g : \pmone^{n^{\delta}} \to \pmone$ satisfying $\adeg(g) = \Omega(n^{\delta})$, and $F$ is the completion of $f$ that evaluates to $-1$ on all non-promise inputs.  The same proof of Theorem~\ref{thm: extmain} also goes through, but we fix $g = \PARITY_{n^{\delta}/2}$ for convenience.
\end{remark}

\subsection{Lower bound}\label{sec: lower}

In this section, we first prove Lemma~\ref{lem: lifting}.
We require the following observation.
\begin{observation}\label{obs: obs}
For any monomial $\chi_S,~S\subseteq [n]$ and any $j\in S$, we have 
\[
\Exp_{x_j \sim \pmone}[\chi_S(x)] = 0,
\]
where $x_j$ is distributed uniformly over $\pmone$.
\end{observation}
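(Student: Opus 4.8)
The plan is to exploit that $\chi_S$ is multilinear and that $j \in S$, so the variable $x_j$ appears in the product defining $\chi_S$ with degree exactly one. Concretely, first write $\chi_S(x) = x_j \cdot \chi_{S \setminus \bra{j}}(x)$, where the factor $\chi_{S \setminus \bra{j}}(x) = \prod_{i \in S,\, i \neq j} x_i$ does not depend on the coordinate $x_j$. Then, holding every coordinate other than $x_j$ fixed and using linearity of expectation, we get
\[
\Exp_{x_j \sim \pmone}[\chi_S(x)] = \chi_{S \setminus \bra{j}}(x) \cdot \Exp_{x_j \sim \pmone}[x_j].
\]
Finally, since $x_j$ is uniform on $\pmone$, we have $\Exp_{x_j \sim \pmone}[x_j] = \tfrac12(1) + \tfrac12(-1) = 0$, and the claim follows.

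There is no genuine obstacle here; the statement is essentially a one-line computation. The only point worth being careful about is that the expectation is taken over the single coordinate $x_j$ while the remaining coordinates are held fixed, so the displayed identity is valid pointwise in those coordinates, and hence the averaged quantity is identically the zero function of the remaining variables. This is precisely the form in which the observation gets used later: when one averages a polynomial over an address variable $x_j$, every monomial whose support contains $j$ contributes $0$, so only monomials avoiding $x_j$ survive.
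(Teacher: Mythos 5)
Your proof is correct and is exactly the standard one-line computation this observation calls for: factor $\chi_S(x) = x_j\,\chi_{S\setminus\bra{j}}(x)$, note the second factor is constant with respect to $x_j$, and use $\Exp_{x_j \sim \pmone}[x_j] = 0$. The paper itself states the observation without proof precisely because this is the intended (and essentially only) argument, so there is nothing to add.
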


\begin{proof}[Proof of Lemma~\ref{lem: lifting}]

Let $F = f^{\ADD_{m, t}}$. Recall that our goal is to show that $\log \asnorm{F}{1/3} = \Omega(\adeg(f) \log t)$. We may assume $\adeg(f) \geq 1$, because the lemma is trivially true otherwise.  

Towards a contradiction, suppose there exists a polynomial $P$ of spectral norm strictly less than $\left(\frac{1}{10}\adeg_{0.99}(f) \log t\right)$ uniformly approximating $F$ to error $1/3$ on the promise inputs (recall that from Lemma~\ref{lem: asnormequiv}, we have $\adeg(f) = \Theta(\adeg_{0.99}(f))$).

Let $\nu$ be a distribution on the address bits of $\ADD_{m, t}$ such that $\nu$ is supported only on assignments to the address variables that do not select $\star$, and is the uniform distribution over these assignments.
Let $\mu = \nu^n$ be the product distribution over the address bits of the addressing functions in $F$.

\begin{itemize}
    \item For any assignment $z$ of the address variables from the support of $\mu$, define a relevant (target) variable to be one that is selected $z$. Analogously, define a target variable to be irrelevant if it is not selected by $z$. Define a monomial to be relevant if it does not contain irrelevant variables, and irrelevant otherwise.
    \item Note that for any target variable, the probability with which it is selected is exactly $1/t$.
    \item Thus under any assignment $z$ drawn from $\mu$, for any monomial of the restricted function $P|_z$ of degree $t \geq \adeg_{0.99}(f)$, the probability that it is relevant is at most $1/t^{\adeg_{0.99}(f)}$.
    \item Hence, 
    \begin{align*}
    & \Exp\limits_{z\sim\mu}[\text{$\ell_1$-norm of relevant monomials in } P|_{z} \text{ of degree } \geq \adeg_{0.99}(f)]\\
    &= \sum\limits_{|S| \geq \adeg_{0.99}(f)} |w_S| \Pr\limits_{z\sim\mu}[\chi_S \text{ is relevant}] \\
    &\leq \Pr\limits_{z\sim\mu}[\chi_S \text{ is relevant}] \cdot \snorm{P} \\
    &< \frac{1}{t^{\adeg_{0.99}(f)}} \cdot 2^{\frac{1}{10}\adeg_{0.99}(f) \log t} \\
    & = 2^{(-\frac{9}{10})\adeg_{0.99}(f)\log t}< \frac{3}{5},
    \end{align*}
    where the last inequality holds because $t\geq 2$ and $\adeg_{0.99}(f) \geq 1$.
    \item Fix an assignment to the address variables from the support of $\mu$ such that under this assignment, the $\ell_1$-norm of the relevant monomials in $P$ of degree $\geq \adeg_{0.99}(f)$ is less than $3/5$.
    \item Note that under this assignment (in fact under any assignment in the support of $\mu$), the restricted $F$ is just the function $f$ on the $n$ variables selected by the addressing functions. Denote by $P_1$ the polynomial on the target variables obtained from $P$ by fixing address variables as per this assignment.
    \item Drop the relevant monomials of degree $\geq \adeg_{0.99}(f)$ from $P_1$ to get a polynomial $P_2$, which uniformly approximates the restricted $F$ (which is $f$ on $n$ variables) to error $1/3 + 3/5 < 0.99$.
    \item Take expectation over irrelevant variables (from the distribution where each irrelevant variable independently takes values uniformly from $\pmone$).  Under this expectation, the value of $F$ does not change (since irrelevant variables do not affect $F$'s output by definition), and all irrelevant monomials of $P_2$ become 0 (using Observation~\ref{obs: obs} and linearity of expectation).  Hence, under this expectation we have $\Exp[P_2] = P_3$, where $P_3$ is a polynomial of degree strictly less than $\adeg_{0.99}(f)$.  Furthermore, $P_3$ uniformly approximates $f$ to error less than $0.99$ which is a contradiction. 
\end{itemize}
\end{proof}

As a corollary of Lemma~\ref{lem: lifting}, we obtain a lower bound on the approximate spectral norm of $F$, where $F$ is defined as in Section~\ref{sec: Fdefn}.  This yields a proof of Theorem~\ref{thm: wtdeg}.

\begin{proof}[Proof of Theorem~\ref{thm: wtdeg}]
Given $0 < \delta < 1$, construct $F$ as in Section~\ref{sec: Fdefn}.  Claim~\ref{claim: ndelta} implies 
\[
Q(F) = \Theta(n^{\delta}).
\]
Let $f = \PARITY_{n^{\delta}/2}^{\HADD_{n^{1 - \delta}}}$.  Lemma~\ref{lem: lifting} implies that 
\[
\asnorm{f}{1/3} = \Omega(n^{\delta} \log n).
\]
Since $F$ is a completion of $f$, we have 
\[
\asnorm{F}{1/3} = \Omega(n^{\delta} \log n),
\]
which proves the lower bound in Theorem~\ref{thm: wtdeg}.  The upper bound follows from Theorem~\ref{thm: qtm_sim_log_loss}.
\end{proof}

We are now ready to prove our main theorem.
\begin{proof}[Proof of Theorem~\ref{thm: extmain}]
It immediately follows from Theorem~\ref{thm: wtdeg} and Theorem~\ref{thm: LS}.
\end{proof}

\section{Conclusions}

We conclude with the following points: first, we find our main result somewhat surprising that simulating a query algorithm by a communication protocol in the quantum context has a larger overhead than in the classical context. Second, it is remarkable that this relatively fine overhead of $\log n$ can be detected using analytic techniques that are an adaptation of the generalized discrepancy method. Third, the function that we used in this work is an $\XOR$ function. Study of this class of functions is proving to be very insightful. A recent example is the refutation of the log-approximate-rank conjecture~\cite{CMS19} and even its quantum version~\cite{ABT18,SdW18}. Our work further advocates the study of this rich class. 

An open question that remains is whether there exists a Boolean function $F : \pmone^n \to \pmone$ such that $Q^{cc}(F \circ \wedge) = \Omega(Q(F) \log n)$.  Or does there exist a better quantum communication protocol for $(F \circ \wedge)$ that does not incur the logarithmic factor loss?

Along with the fact that $\adeg(f) \leq 2Q(f)$~\cite{BBC+01}, Theorem~\ref{thm: wtdeg} yields the following corollary.
\begin{corollary}\label{cor: wtdeg}
For any constant $0 < \delta < 1$, there exists a total function $F : \pmone^n \to \pmone$ for which $\adeg(F) = O(n^{\delta})$ and
\[
\log \asnorm{F}{1/3} = \Omega(\adeg(F) \log n).
\]
\end{corollary}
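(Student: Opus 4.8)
The plan is to derive this as an immediate consequence of Theorem~\ref{thm: wtdeg} together with the standard inequality $\adeg(f) \le 2Q(f)$ of~\cite{BBC+01}. First I would invoke Theorem~\ref{thm: wtdeg} with the given constant $\delta$ to obtain a total function $F : \pmone^n \to \pmone$ satisfying $Q(F) = \Theta(n^{\delta})$ and $\log\asnorm{F}{1/3} = \Theta(Q(F)\log n)$.

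For the first conclusion, the degree upper bound, I would simply chain $\adeg(F) \le 2Q(F) = O(n^{\delta})$, which gives $\adeg(F) = O(n^{\delta})$ directly.

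For the second conclusion, the lower bound on the approximate spectral norm, the only point requiring a moment's care is the \emph{direction} of the inequality $\adeg \le 2Q$: we need it to turn a lower bound in terms of $Q(F)$ into one in terms of $\adeg(F)$. Since $\adeg(F) \le 2Q(F)$ we have $Q(F) \ge \adeg(F)/2$, and hence $\log\asnorm{F}{1/3} = \Omega(Q(F)\log n) = \Omega(\adeg(F)\log n)$, as required. If one also wanted the matching $O(\adeg(F)\log n)$ bound it would be supplied by Claim~\ref{claim: adegnaivelb}, which holds for every total function; combining the two would yield $\log\asnorm{F}{1/3} = \Theta(\adeg(F)\log n)$, but only the $\Omega$ direction is asserted in the statement.

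I do not expect any genuine obstacle here: all the mathematical content already resides in Theorem~\ref{thm: wtdeg} (and therefore ultimately in Lemma~\ref{lem: lifting} combined with the maximality of the approximate degree of $\PARITY$, which drives the spectral-norm lower bound, and in the Bernstein--Vazirani plus Grover's search argument of Claim~\ref{claim: ndelta}, which controls $Q(F)$). The corollary is a purely bookkeeping translation from the parameter $Q(F)$ to the parameter $\adeg(F)$ using $\adeg(F) = \Theta(Q(F))$ is \emph{not} claimed --- only the one-sided bound $\adeg(F) \le 2Q(F)$ is used, and it happens to point the correct way for both halves of the statement.
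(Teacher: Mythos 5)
Your proposal is correct and matches the paper's own (one-line) derivation: the paper obtains Corollary~\ref{cor: wtdeg} exactly by combining Theorem~\ref{thm: wtdeg} with the bound $\adeg(f) \leq 2Q(f)$ of~\cite{BBC+01}, which is precisely your chain $\adeg(F) \leq 2Q(F) = O(n^{\delta})$ and $\log\asnorm{F}{1/3} = \Omega(Q(F)\log n) = \Omega(\adeg(F)\log n)$. Your observation that only the one-sided inequality is needed, and that it points the right way for both conclusions, is exactly the point.
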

It is easy to verify that the constructions of $F$ that yield Theorem~\ref{thm: wtdeg} for any fixed constant $0 < \delta < 1$, also satisfy $\adeg(F) = \Theta(n^{\delta})$. Thus, Corollary~\ref{cor: wtdeg} also gives a negative answer to Open Problem 2 in~\cite[Section 6]{ACK+18}, where it was asked if any degree-$d$ approximating polynomial to a Boolean function of approximate degree $d$ has weight at most $2^{O(d)}$. Thus to prove min-entropy of the Fourier spectrum of a Boolean function is upper bounded by approximate degree, it cannot follow from their observation that min-entropy is upper bounded by the logarithm of the approximate spectral norm. It remains an interesting and important open problem: (how) can one prove that the min-entropy of the Fourier spectrum of a Boolean function is upper bounded by a constant multiple of its approximate degree? This inequality is an implication of the Fourier Entropy Influence (FEI) Conjecture. 

\section{Acknowledgements}

We thank Ronald de Wolf and Srinivasan Arunachalam for various discussions on this problem. They provided us with a number of important pointers that were essential for our understanding of the problem and its importance.

\bibliography{5-bibo}

\end{document}